\def\BibTeX{{\rm B\kern-.05em{\sc i\kern-.025em b}\kern-.08em
    T\kern-.1667em\lower.7ex\hbox{E}\kern-.125em}}
\DeclarePairedDelimiter\ceil{\lceil}{\rceil}
\newcommand{\eat}[1]{}
\newenvironment{proof-sketch}{\noindent{\bf Sketch of Proof}\hspace*{1em}}{\qed\bigskip}
\newcommand{\thickbar}[1]{\mathbf{\bar{\text{$#1$}}}}
\newtheorem{definition}{Definition}
\newtheorem{theorem}{Theorem}
\newtheorem{lemma}{Lemma}
\newtheorem{example}{Example}
\definecolor{bblue}{HTML}{4F81BD}
\definecolor{rred}{HTML}{C0504D}
\definecolor{ggreen}{HTML}{9BBB59}
\definecolor{ppurple}{HTML}{9F4C7C}
\begin{document}

\title{\LARGE Composite Resource Scheduling for Networked Control Systems}

\author{\IEEEauthorblockN{Peng Wu\IEEEauthorrefmark{1}, Chenchen Fu\IEEEauthorrefmark{2}, Tianyu Wang\IEEEauthorrefmark{1}, Minming Li\IEEEauthorrefmark{3}, Yingchao Zhao\IEEEauthorrefmark{4}, Chun Jason Xue\IEEEauthorrefmark{3}, Song Han\IEEEauthorrefmark{1}}
\IEEEauthorblockA{\IEEEauthorrefmark{1}University of Connecticut \; \; \{peng.wu, tianyu.wang, song.han\}@uconn.edu}  
\IEEEauthorblockA{\IEEEauthorrefmark{2}Southeast University \; \;
chenchen\_fu@seu.edu.cn}
\IEEEauthorblockA{\IEEEauthorrefmark{3}City University of Hong Kong \; \;
\{minming.li, jasonxue\}@cityu.edu.hk}
\IEEEauthorblockA{\IEEEauthorrefmark{4}Hong Kong Caritas Institute of Higher Education \; \; zhaoyingchao@gmail.com}
}

% make the title area
\maketitle
\thispagestyle{plain}
\pagestyle{plain}

%TODO mandatory: add short abstract of the document
\begin{abstract}
Real-time end-to-end task scheduling in networked control systems (NCSs) requires the joint consideration of both network and computing resources to guarantee the desired quality of service (QoS). This paper introduces a new model for composite resource scheduling (CRS) in real-time networked control systems, which considers a strict execution order of sensing, computing, and actuating segments based on the control loop of the target NCS. We prove that the general CRS problem is NP-hard and study two special cases of the CRS problem.
The first case restricts the computing and actuating segments to have unit-size execution time while the second case assumes that both sensing and actuating segments have unit-size execution time. We propose an optimal algorithm to solve the first case by checking the intervals with 100\% network resource utilization and modify the deadlines of the tasks within those intervals to prune the search. For the second case, we propose another optimal algorithm based on a novel backtracking strategy to check the time intervals with the network resource utilization larger than 100\% and modify the timing parameters of tasks based on these intervals. For the general case, we design a greedy strategy to modify the timing parameters of both network segments and computing segments within the time intervals that have network and computing resource utilization larger than 100\%, respectively. The correctness and effectiveness of the proposed algorithms are verified through extensive experiments.
\end{abstract}

\section{Introduction}\label{sec:introduction}

Networked control systems (NCSs) are fundamental to many mission- and safety-critical applications that must work under real-time constraints to ensure timely collection of sensor data and on-time delivery of control decisions. The Quality of Service (QoS) offered by a NCS is thus often measured by how well it satisfies the end-to-end deadlines of the real-time tasks executed in NCSs~\cite{hei_pipac:_2013, gatsis_optimal_2014,tian_deadline-constrained_2016,borgers_tradeoffs_2017,zhan_optimal_2016}. 

A typical real-time task in NCSs involves both computing component(s) for executing the control algorithms on the controller and communication component(s) for exchanging sensor data and control signals between sensors/actuators and the controller. Traditional approaches to scheduling those tasks consider CPU and network resource scheduling separately. They either make an oversimplified assumption that the execution time of the control algorithm is negligible or consider it as constant. For example, extensive work has been reported in recent years on sensing and control task scheduling in real-time industrial networks~\cite{han_reliable_2011,leng_improving_2014,saifullah_real-time_2010,lu_real-time_2016,yang_assignment_2015,dezfouli_mobility-aware_2016,craciunas_scheduling_2016,nayak_time-sensitive_2016,lin_tsca_2018}. Most of those work, however only focused on modeling the constraints of transmission conflicts but paid less attention to the computation time of the control algorithms when enforcing the end-to-end deadlines. Those approaches thus either cannot handle complex NCS applications which require the implementation of time-consuming control algorithms such as model predictive control using online optimization, data-driven system identification, and online learning control~\cite{jo_development_2014,mohammadi_optimizing_2018,jo_development_2015,klus_data-driven_2020,lee_nonparametric_2017}, or will lead to unnecessarily low utilization of the computing and network resources due to the lack of efficient methods to schedule those resources in a joint fashion.  Take the F1/10 autonomous car system as an example that aims to race in a rectangular track while avoiding any obstacles~\cite{chen_online_2019}. The communication component on the system for exchanging sensing/actuating information has a worst-case execution time of $0.3$ milliseconds; while the PID controller for steering control and the vision controller for identifying corners have worst-case execution times of $0.4$ and $50$ milliseconds, respectively. These comparable execution times of both networking and computing tasks motivate us to introduce a new task model, namely the Composite Resource Scheduling (CRS) model, for jointly scheduling network and computing resources in NCSs. %This new model is originated from the two machine re-entrant flow-shop model~\cite{chamnanlor_re-entrant_2014}.
In CRS, each real-time composite task consists of three consecutive and dependent segments: a sensing segment to transmit sensor data to the controller, a computing segment to compute the control decisions, and an actuating segment to transmit the control signals to the actuators. The sensing/actuating segments together are called network segments. They share the same network resource while the computing segments compete for the computing resource on the controller. 

Similar models in the literature include %self-suspension model~\cite{chen_unifying_2016},
PRedictable Execution Model (PREM)~\cite{pellizzoni_predictable_2011,alhammad_time-predictable_2014,yao_global_2016} and Acquisition Execution Restitution (AER) model~\cite{maia_closer_2016}. %The self-suspension model assumes a self-suspension delay in the Liu and Layland task model~\cite{liu_scheduling_1973}, and the tasks alternate between the execution and suspension phases. The self-suspension model focuses on single resource scheduling while the CRS model targets at jointly scheduling multiple resources. 
These two models, however are specifically designed to schedule resources in multi-core systems with shared memory where the communication phases including memory read and write are considered as non-preemptive, which is reasonable for the design of multi-core systems. By contrast, the network segments of the CRS model for NCSs are designed to be preemptive to meet the flexibility of the network scheduling. This provides better parallelism when scheduling sensing/actuating segments and computing segments from different tasks. \eat{For a detailed discussion of the related work, please refer to our technical report~\cite{technical_report_2021}.}

%The parallelism of memory access phase and execution phase from two different tasks is allowed if their execution phases occur in different CPU cores. If the memory access phase and execution phase from every two different tasks can execute in parallel, it is an extreme case in PREM and AER that every task takes a CPU core so that the CPU resource is unlimited.

Our work attempts to tackle the composite resource scheduling problem in NCSs, which aims to optimize the usage of network and computing resources in NCSs under the end-to-end deadline constraints. Based on the proposed CRS model, we provide a comprehensive analysis on the complexity of the problem, and prove that the general CRS problem is NP-hard in the strong sense. We thus in this paper study two special cases of the CRS problem and present a greedy heuristic for the general cases as well, focusing on the design of the corresponding algorithms to construct feasible composite schedules. We start with the first case where both the computing segment and actuating segment have unit-size execution time. In this case, an optimal scheduling algorithm is proposed. The algorithm first modifies the deadlines of the CRS task set based on the intervals with 100\% network resource utilization and then applies the Earliest Deadline First (EDF) scheduling algorithm to schedule the sensing and actuating segments together according to the modified deadlines. For the second case where the execution time of both sensing and actuating segments are unit-size, and the execution time of the computing segment can be an arbitrary integer larger than one, we design another optimal scheduling algorithm which schedules the computing segments in the first stage and the sensing and actuating segments in the second stage. If the scheduling fails in the second stage, the algorithm rolls back to the initial stage based on a novel backtracking search strategy by adding new constraints to the original problem recursively, and eventually lead to a feasible composite schedule if it exists. For the general case, we propose a heuristic solution with a roll-back mechanism. EDF is first employed to schedule the segments. If EDF fails to find a feasible schedule, we locate the intervals with either network resource or computing resource utilization larger than 100\% and modify the deadline of the segment included in the located interval with the earliest release time. We iteratively run EDF and modify the timing parameters until we find a feasible schedule. The effectiveness of the proposed algorithms has been validated through extensive experiments. Our results show that the proposed scheduling algorithms outperform the baseline algorithms in terms of schedulability.

The remainder of this paper is organized as follows: Section~\ref{sec:model} presents the task model and the formulation of the general composite resource scheduling problem and its two special cases. Section~\ref{sec:CRS-111} and Section~\ref{sec:CRS-1m1} develop the optimal scheduling algorithms for the first and second cases, respectively. Section~\ref{sec:CRS-1m1} introduces a heuristic scheduling algorithm for the general case. Section~\ref{sec:evaluation} presents the experimental results. Section~\ref{sec:relatedWorks} gives a summary of the related works. Section~\ref{sec:conclusion} concludes the paper and discusses the future work.

\section{Task Model and Problem Formulation}\label{sec:model}

In this section, we introduce the CRS model and present the constraint programming formulation of the composite resource scheduling problem. Based on different settings on the execution time of the network/computing segments, we introduce three CRS models.
%, namely $h\text{-}1\text{-}1$, $1\text{-}m\text{-}1$, and $h_1\text{-}h_2\text{-} h_3$. 
\eat{
\begin{figure}[!ht]
    \centering
    \includegraphics[width=3.0in]{fig_robot.png}
    %\caption{reference}
    \label{fig:loop}
\end{figure}
\begin{figure}[!ht]
    \centering
    \includegraphics[width=3.0in]{fig_smart.png}
    %\caption{reference}
    \label{fig:loop}
\end{figure}

}

\subsection{Task Model}

Consider a NCS configured in a star topology. The sensors and actuators form the leaf nodes while the controller runs on the root (Gateway). To ensure efficient and safe operation of the control system, the transmissions of sensor data/control signals and the execution of the control algorithm should follow a strict order and meet the designated end-to-end timing requirement. In a typical real-time composite task in NCSs, one or multiple sensors send their measurements to the controller first. The controller then computes the control signals and delivers them to one or multiple actuators. In practice, while the CPU is waiting for the sensing data, it can be set to idle. Once the transmission of sensing data is finished, an interrupt notifies the CPU to obtain the data stored in a specified shared memory. This decouples the computation from sensing and actuating so that the transmissions of sensing data and control signals share the network resource while the executions of the control algorithms only compete for the CPU resource. In our CRS model, we consider multiple-input multiple-output (MIMO) control systems, which have multiple sensors as the input and multiple actuators as the output. %The transmissions of sensor data and control signals share the network resource while the executions of the control algorithms compete for the CPU resource. 
As shown in Fig.~\ref{fig:composite}, each real-time composite task $\tau_i = (T_i, D_i, C_{i,s}, C_{i,c}, C_{i,a})$ is associated with a period $T_i$ and a relative deadline $D_i$, where it holds that $D_i \leq T_i$. It consists of three consecutive and dependent segments:
\begin{itemize}
\item Sensing segment: it utilizes the network resource and has the transmission time of $C_{i,s}$;  
\item Computing segment: it utilizes the CPU resource and has the execution time of $C_{i,c}$;
\item Actuating segment: it utilizes the network resource and has the transmission time of $C_{i,a}$. 
\end{itemize}

\begin{figure}
  \centering
  \includegraphics[width=\columnwidth]{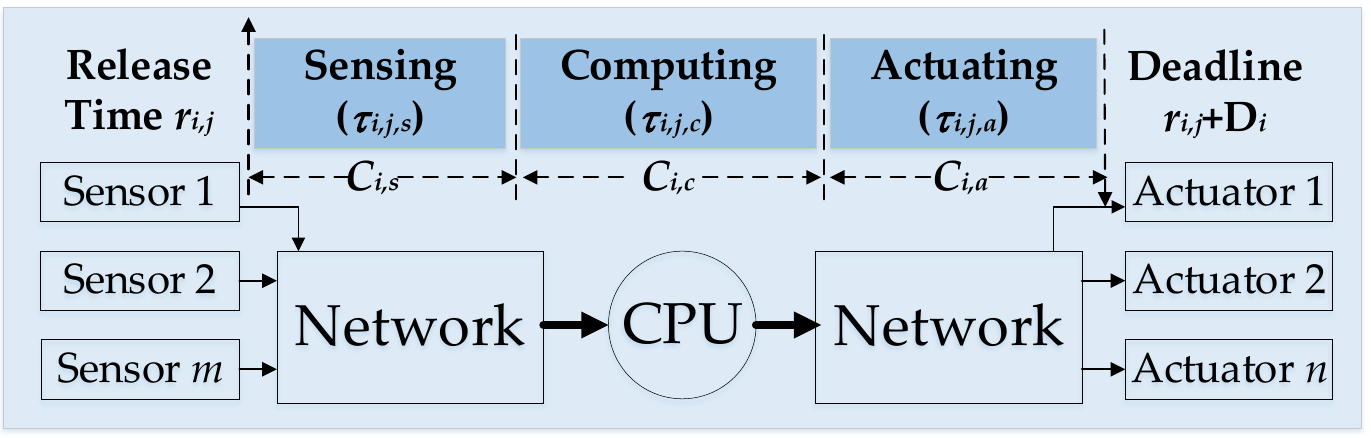}
  \caption{An overview of the real-time composite task model}
  \vspace{0.1in}
  \label{fig:composite}
\end{figure}

In this paper, we consider a time-slotted system for network. A network segment consists of multiple fragments where each fragment takes exactly one time slot in the super-frame to transmit. We define the execution times $C_{i,s}$ and $C_{i,a}$ of the sensing and actuating segments to be the sum of the network transmission times of all sensors and actuators connected to a control task, and that their activation frequency is equal to the one of the task. We assume that (1) the unit sizes of the network and computing resources are the same; (2) the network is single-channel and the CPU is a preemptive uniprocessor; (3) the task system is a synchronous system; and (4) the release time, deadline and execution time are integers.

\subsection{Problem Formulation}

Based on the above task model, we now present the formal definition of the composite resource scheduling problem and give its constraint programming formulation.

\vspace{0.05in}
\noindent \textbf{Composite Resource Scheduling (CRS) Problem:} Consider a set of real-time composite tasks \(\{\tau_1, \tau_2,...,\tau_n\}\) with $\tau_i = (T_i, D_i, C_{i,s}, C_{i,c}, C_{i,a})$ and the hyper-period $\mathcal{H}$ obtained by computing the least common multiple of the periods. The objective of the CRS problem is to find a feasible composite schedule with a length of  $\mathcal{H}$ if it exists so that the deadlines of all the real-time composite tasks are met.

%for each task $\tau_i$ $(1 \leq i \leq n)$,
We first formulate the CRS problem as a constraint programming problem. It aims to construct the feasible schedules $\mathcal{S}_{{\rm net}}  =\{f_{i,j,s,k}, f_{i,j,a,k}\}$ for the network segments and $\mathcal{S}_{{\rm com}}=\{f_{i,j,c,k}\}$ for the computing segments, where $f_{i,j,s,k}$, $f_{i,j,c,k}$, and $f_{i,j,a,k}$ are the finish times of the $k^{\text{th}}$ unit of sensing/computing/actuating segments of the $j^{\text{th}}$ instance of the task $\tau_{i}$, respectively. The $j^{\text{th}}$ instance of task $\tau_{i}$, denoted as $\tau_{i,j}$, has a release time $r_{i,j}$ and a deadline $r_{i,j} + D_{i}$. The construction of the feasible schedules is subject to the following constraints: 

\vspace{6pt}
\noindent{\bf Release time and deadline constraints:}
\vspace{6pt}
\begin{equation} \label{eq:1}
\begin{split}
&f_{i,j,s,1}\geq r_{i,j} + 1   \\ 
&f_{i,j,a,C_{i,a}}\leq r_{i,j} + D_{i}
\end{split}
\end{equation}

\vspace{6pt}
\noindent{\bf Segment order constraints:}
%\vspace{2pt}
%\begin{equation}
%    f_{i,1}^{c}\geq f_{i,C_i^{s}}^{s}+1
%\end{equation}
%\begin{equation}
%    f_{i,1}^{a}\geq f_{i,C_i^{c}}^{c}+1 
%\end{equation}
%\begin{equation}
%    f_{i,k+1}^{sc}\geq f_{i,k}^{sc}+1,  \quad \quad \forall k \in \{1,2,...,C_i^{sc}-1\}
%\end{equation}
%\begin{equation}
%    f_{i,l+1}^{c}\geq f_{i,l}^{c}+1, \quad \quad \forall l \in \{1,2,...,C_i^{c}-1\}
%\end{equation}
%\begin{equation}
%    f_{i,m+1}^{ca} \geq f_{i,m}^{ca}+1, \quad \quad \forall m \in \{1,2,...,C_i^{ca}-1\}
%\end{equation}
\begin{equation} \label{eq:2}
\begin{split}
&f_{i,j,c,1}\geq f_{i,j,s,C_{i,s}}+1   \\ 
&f_{i,j,a,1}\geq f_{i,j,c,C_{i,c}}+1 \\
f_{i,j,s,k+1}\geq f_{i,j,s,k}&+1,  \quad   \forall k \in \mathbb{Z}\cap [1, C_{i,s}-1]\\
f_{i,j,c,k+1}\geq f_{i,j,c,k}&+1, \quad   \forall k \in \mathbb{Z}\cap [1, C_{i,c}-1]\\
f_{i,j,a,k+1} \geq f_{i,j,a,k}&+1, \quad   \forall k \in \mathbb{Z}\cap [1, C_{i,a}-1]
\end{split}
\end{equation}

\vspace{3pt}
\noindent  {\bf Network resource constraints:}
\vspace{3pt}
\begin{equation} \label{eq:3}
\begin{split}
    &\forall i\neq z, \quad  \forall x,y \in\{a,s\} \\
    &\forall p \in \mathbb{Z}\cap[1,C_{i,x}],  \quad \forall q \in \mathbb{Z}\cap[1,C_{z,y}] \\
    &f_{i,j,x,p}\leq f_{z,j,y,q}-1 \quad  \text{OR} \quad  f_{i,j,x,p}\geq f_{z,j,y,q}+1  
\end{split}
\end{equation}

%\begin{equation}
%    f_{i,p}^{a}\leq f_{j,q}^{a}-1 \quad  or \quad  f_{i,p}^{a}\geq f_{j,q}^{a}+1 \quad \forall i\neq j
%\end{equation}
%where $p \in \{1,2,...,C_i^{a}\}$ and $q \in \{1,2,...,C_j^{a}\}$.
%\begin{equation}
%    f_{i,p}^{s}\leq f_{j,q}^{a}-1 \quad  or \quad  f_{i,p}^{s}\geq f_{j,q}^{a}+1 \quad \forall i\neq j 
%\end{equation}where $p \in \{1,2,...,C_i^{s}\}$ and $ q \in \{1,2,...,C_j^{a}\}$.

\vspace{0.05in}
\noindent {\bf Computing resource constraints:}
\vspace{3pt}
\begin{equation} \label{eq:4}
\begin{split}
&\forall i\neq z,   \quad \forall p \in \mathbb{Z}\cap[1,C_{i,c}],  \quad \forall q \in \mathbb{Z}\cap[1,C_{z,c}]\\
& f_{i,j,c,p}\leq f_{z,j,c,q}-1 \quad  \text{OR} \quad  f_{i,j,c,p}\geq f_{z,j,c,q}+1 
\end{split}
\end{equation}
\vspace{2pt}

Constraint \eqref{eq:1} requires that the first unit of the sensing segment and the last unit of the actuating segment of any task $\tau_i$ are constrained by $\tau_i$'s release time and deadline, respectively. Constraint \eqref{eq:2} defines the constraints for the sequential order for each unit of sensing, computing, and actuating segments. One can easily derive the earliest possible release time and the latest possible deadline for each segment based on Constraints \eqref{eq:1} and \eqref{eq:2}. Constraints \eqref{eq:3} and \eqref{eq:4} define the constraints for different composite tasks to compete for the network and computing resources, where the OR operator in the constraints can be modeled by using the big M method~\cite{griva_linear_2009}. Note that the above formulation employing the time-indexed model~\cite{KU_wenyang_mixedInteger_2016} is computationally efficient since the problem studied in the paper is a preemptive flow shop scheduling problem.

\vspace{0.05in}
\begin{table}[h]
    \centering
    \small
    \caption{Problems and solutions under different CRS models.}
    \begin{tabular}{|c|c|}
    \hline 
    Task Model & Complexity or Solution  \\
        \hline  
        $h\text{-}1\text{-}1$ & Exponential-time solvable (Alg.~\ref{alg1})
         \\
         $1\text{-}m\text{-}1$ & Exponential-time solvable (Alg.~\ref{alg2})
        % \\
        % $1\text{-}m^*\text{-}1$ & NP-hard \han{Not known yet.}
         \\
        $h_1\text{-}h_2\text{-}h_3$ & NP-hard and solved by heuristics (Alg.~\ref{alg4}) 
         \\
    \hline
    \end{tabular}
    \label{tab:table1}
\end{table}
\vspace{0.05in}

\eat{
\begin{center}
\vbox{\centering{\footnotesize TABLE I. \quad Problems and solutions under different CRS models} \vskip2mm
\renewcommand{\baselinestretch}{1.25}

{\footnotesize\centerline{\tabcolsep=20pt\begin{tabular}{0.485\textwidth}{cc}
\toprule
 Task Model & Complexity or Solution  \\
\hline  $h\text{-}1\text{-}1$ & Polynomial-time solvable (Alg.~\ref{alg1})
 \\
 $1\text{-}m\text{-}1$ & Polynomial-time solvable (Alg.~\ref{alg2})
% \\
% $1\text{-}m^*\text{-}1$ & NP-hard \han{Not known yet.}
 \\
$h_1\text{-}h_2\text{-}h_3$ & NP-hard and solved by heuristics (Alg.~\ref{alg4}) 
 \\

\bottomrule
\end{tabular}}}}
\vspace{0.1in}
\end{center}
}

To study the CRS problem comprehensively, we consider three cases of the CRS model based on the size of the execution time of each segment. These cases are summarized in Table I. The first case is the $h\text{-}1\text{-}1$ model where \(C_{i,c}=C_{i,a}=1\) and $C_{i,s}=h$ ($h>0$) for each segment of a task. The second case uses the $1\text{-}m\text{-}1$ model with \(C_{i,s}=C_{i,a}=1\) and $C_{i,c}=m$ ($m>1$). The third case is the general model $h_1\text{-}h_2\text{-}h_3$ with \(C_{i,s}=h_1\), \(C_{i,c}=h_2\), and $C_{i,a}=h_3$ ($h_1, h_2, h_3>0$). These three cases represent three types of application scenarios in the real life. While $h-1-1$ refers to the task which has a longer sensing segment, $1-m-1$ represents the task which has a longer computing segment, and $h_1\text{-}h_2\text{-}h_3$ represents the general applications. The following theorem shows that the CRS problem under the general model is NP-hard in the strong sense.

\begin{theorem}
The CRS problem under the general model is NP-hard in the strong sense. 
\end{theorem}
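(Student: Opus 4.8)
The plan is to establish strong NP-hardness by a polynomial-time reduction from \emph{3-Partition}, which is NP-complete in the strong sense and is the canonical source for hardness results in scheduling because its input integers may be assumed to be bounded by a polynomial in the instance size. Recall a 3-Partition instance: $3m$ positive integers $a_1,\dots,a_{3m}$ with $\sum_{i} a_i = mB$ and $B/4 < a_i < B/2$ for every $i$; the question is whether the indices can be partitioned into $m$ triples, each summing exactly to $B$. The size restriction on the $a_i$ forces every feasible triple to contain exactly three elements, which is precisely the feature I will exploit.

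For the reduction I would work within a single hyper-period by giving every constructed task a common period $T$, so that $\mathcal{H}=T$ and each task has exactly one instance; the problem then reduces to a one-shot scheduling question. First I would introduce one \emph{item task} $\tau_i$ for each integer $a_i$, encoding the value in its computing segment, e.g.\ $C_{i,s}=1$, $C_{i,c}=a_i$, $C_{i,a}=1$. Next I would add a family of \emph{frame (blocker) tasks} whose sole purpose is to carve the CPU timeline into $m$ disjoint windows of length exactly $B$, separated by gaps that the frame tasks' own computing segments are forced to occupy. The positions of the frames would be pinned down by tight release-time/deadline pairs together with their network (sensing/actuating) segments, which compete with one another for the single network channel; by making the total network and CPU demand meet the available capacity with no slack, the only way to satisfy every deadline is to place the item computing segments inside the $m$ windows. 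All numbers introduced ($B$, the window boundaries, the common period) are polynomial in the 3-Partition input under its unary encoding, so the reduction runs in polynomial time and preserves strong hardness.

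The equivalence then splits into two directions. The forward direction is routine: given a 3-Partition into triples summing to $B$, I would schedule the three corresponding computing segments back-to-back inside the matching window of length $B$, fit the unit sensing/actuating segments into the slack left on the network around the frame transmissions, and verify that every deadline is met. The reverse direction is where the real work lies and is the step I expect to be the main obstacle: I must show that \emph{every} feasible schedule induces a valid 3-Partition. The difficulty is twofold. Because the network resource is preemptive and is visited twice by each task in the reentrant pattern sensing--computing--actuating, a feasible schedule has considerable freedom to interleave segments, so I cannot simply appeal to a single-machine packing argument. I would therefore have to argue that the frame gadget is rigid---that in any schedule meeting all deadlines the CPU is provably unavailable to item tasks outside the $m$ windows, and that the unit network segments cannot be rearranged so as to ``leak'' computing work across a window boundary. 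Once this rigidity is established, the capacity accounting (total computing demand $mB$ distributed among $m$ windows of size $B$) together with the range restriction $B/4 < a_i < B/2$ forces exactly three items per window summing to $B$, yielding the desired partition and completing the reduction.
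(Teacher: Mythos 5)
Your proposal follows the same high-level strategy as the paper -- a reduction from 3-Partition with item tasks encoding the integers plus auxiliary blocker tasks that carve the timeline into $m$ windows of length $B$ -- and your remarks about unary encoding preserving strong hardness are correct. However, there is a genuine gap exactly where you anticipate one: the frame gadget is never constructed, and within the framework you fix for yourself it cannot be constructed by the mechanism you describe. The CRS model is a \emph{synchronous} system, so under your common-period, single-instance assumption every constructed task is released at time $0$; there are no ``tight release-time/deadline pairs'' available to pin different frames at staggered positions, since the only timing knobs are deadlines and execution times. The only way to make a single blocker rigid is to give it zero slack, $C_s+C_c+C_a=D$, but that forces its sensing segment to occupy the network during $[0,C_s]$, and two zero-slack blockers then collide on the single network channel at time $0$. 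So the rigidity argument you defer as ``the main obstacle'' is not merely hard -- the pinning mechanism you invoke is unavailable in this model, and the reverse direction of your reduction cannot be completed as sketched.

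The paper escapes precisely by dropping your single-instance restriction: it uses one \emph{periodic} divider task with $C_{s}=\mu B$, $C_{c}=2\mu B$, $C_{a}=\mu B$ and $D=T=4\mu B$ (zero slack), whose $n+1$ instances are released at $0,4\mu B,8\mu B,\dots$ and are each individually rigid; together they alternately saturate the network and the CPU, carving both resources into interleaved full-utilization windows of length $2\mu B$. The staggered release times you cannot assign by hand are obtained for free from periodicity. A second difference matters for the packing argument: the paper encodes each integer $x_i$ in \emph{both} the sensing and the computing segment ($C_{i,s}=C_{i,c}=2\mu x_i$, $C_{i,a}=1$), so the sensing segments of a group must pack exactly into one network window, which in turn forces their computing segments to exactly fill the following CPU window; with your unit-size sensing segments this cross-resource forcing disappears, and the ``leakage across window boundaries'' you worry about becomes a real possibility rather than something ruled out by construction. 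In short: right source problem and right shape of reduction, but the gadget that does all the work is missing, and repairing it requires periodic divider instances and dual encoding rather than release-time pinning.
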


\begin{proof}
We reduce 3-Partition to the CRS problem. An instance of 3-Partition consists of a list $A = (x_1, x_2, . . . , x_{3n})$ of positive integers such that $\sum x_i = nB$,  $\frac{B}{4}<x_i<\frac{B}{2}$ %$1/4B<x_i<1/2B$ 
for each $1\leq i\leq 3n$, there exists a partition of $A$ into $A_1$, $A_2$,...,$A_n$ such that $\sum_{x_i \in A_k}x_i = B$ for each $1 \leq k \leq n$~\cite{garey_computers_1990}. 

Given an instance $A = (x_1, x_2, . . . , x_{3n})$ of 3-Partition, we construct an instance of the CRS problem
as follows. As shown in Fig.~\ref{fig:NP}, there will be $3n + 1$ tasks. The first $3n$ tasks are {\em partition} tasks. For each $1 \leq i \leq 3n$, the task $\tau_{i}$ satisfies that $C_{i,s} = 2\mu x_i$, $C_{i,c} =  2\mu x_i$, $C_{i,a} = 1$, $D_i = (4n+1)\mu B + 3n$, and $T_i = 4\mu B$, where $B$ is the sum of each partition and  $\mu = \ceil{\frac{3n}{B}}$. The last task is {\em divider} task which satisfies that $T_i = D_i = 4\mu B$ , $C_{i,s} = \mu B$, $C_{i,c} = 2\mu B$, and $C_{i,a} = \mu B$. The $n+1$ divider tasks divide the timeline of $[\mu B, (4n+3)\mu B]$ into $n+1$ intervals of full computing resource utilization interleaved with $n$ intervals of full network resource utilization, where the partition tasks are scheduled. The length of each of these intervals is exactly $2\mu B$. The sensing segments of the partition tasks from the same partition must be scheduled in a length $2\mu B$ of the interval $[(2j-1)\mu B, (2j+1)\mu B]$ so that their corresponding computing segments 
can anticipate releasing exactly at or before $(2j+1)\mu B$ and fully utilizing the interval $[(2j+1)\mu B, (2j+3)\mu B]$, where $1 \leq j \leq n$. Thus, it is easy to see that there is a feasible schedule of the CRS problem if and only if there exists a 3-Partition. It is clear that the above reduction is polynomial. We thus have proved that the CRS problem is NP-hard in the strong sense.
\end{proof}

\begin{figure}
  \centering
  \includegraphics[width=\columnwidth]{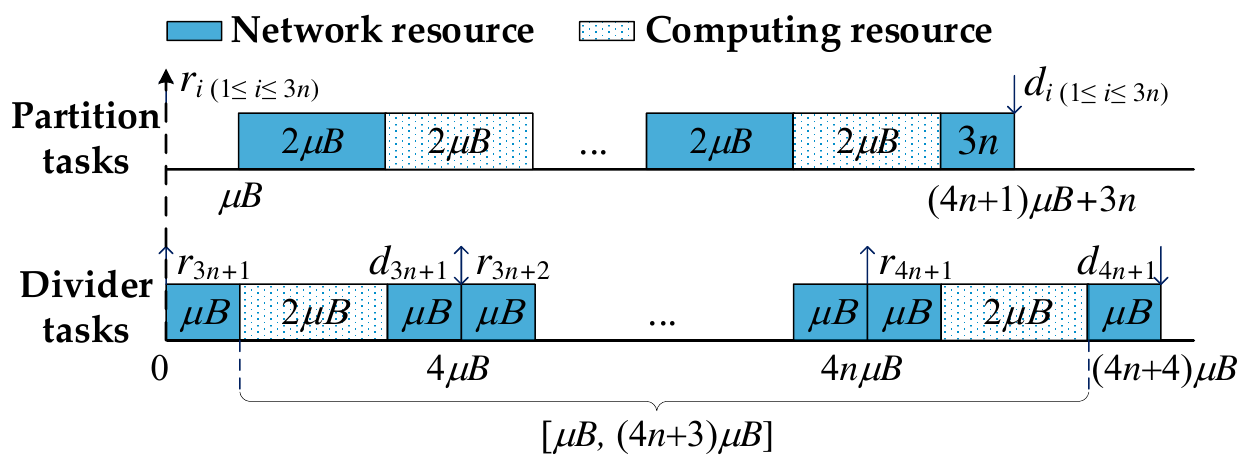}
  \caption{An instance of the composite schedule}
  \vspace{0.1in}
  \label{fig:NP}
\end{figure}

We first exploit the earliest possible release time and the latest possible deadline to define the \textit{effective timing parameters} of the network and computing segments.

\begin{definition}
\label{def:vir}
{\bf Effective Release Time/Deadline:} Given an instance of  real-time composite task $\tau_{i,j}$ with the release time $r_{i,j}$ and deadline $d_{i,j}$, the effective release time/deadline of its sensing segment $\tau_{i,j,s}$ are defined as $\thickbar{r}_{i,j,s} = r_{i,j}$ and $\thickbar{d}_{i,j,s} = d_{i,j}-C_{i,a}-C_{i,c}$. The effective release time/deadline of its computing segment $\tau_{i,j,c}$ are defined as $\thickbar{r}_{i,j,c}=r_{i,j}+C_{i,s}$ and $\thickbar{d}_{i,j,c}=d_{i,j}-C_{i,a}$. The effective release time/deadline of its actuating segment $\tau_{i,a}$ are defined as $\thickbar{r}_{i,j,a} = r_{i,j}+C_{i,s}+C_{i,c}$ and $\thickbar{d}_{i,j,a} = d_{i,j}$.
\end{definition}

We say that a network segment is \textit{effectively included} in a time interval if its effective release time and effective deadline are both within that interval. For instance, a network segment $\tau_{i,j,{\rm net}}$ is effectively included in $[t_0, t_1]$ if $\thickbar{r}_{i,j,{\rm net}}\geq t_0$ and $\thickbar{d}_{i,j,{\rm net}} \leq t_1$. Based on this condition, we define an \textit{effective network demand} over a given interval to be the sum of the execution time of all network segments that are effectively included in that interval. We define the \textit{effective network overload interval} and \textit{effective network tight interval} as follows.

\begin{definition}
\label{def:tig}
{\bf Effective Network Overload/Tight Interval (ENOI/ENTI):} Given a set of real-time composite tasks and a time interval $[t_0, t_1]$, $[t_0, t_1]$ is an effective network overload interval if the effective network demand over $[t_0, t_1]$ is larger than $t_1-t_0$. $[t_0, t_1]$ is an effective network tight interval if the effective network demand over $[t_0, t_1]$ is equal to $t_1-t_0$.
\end{definition}

\begin{figure}[htb]
  \centering
  \includegraphics[width=0.82\columnwidth]{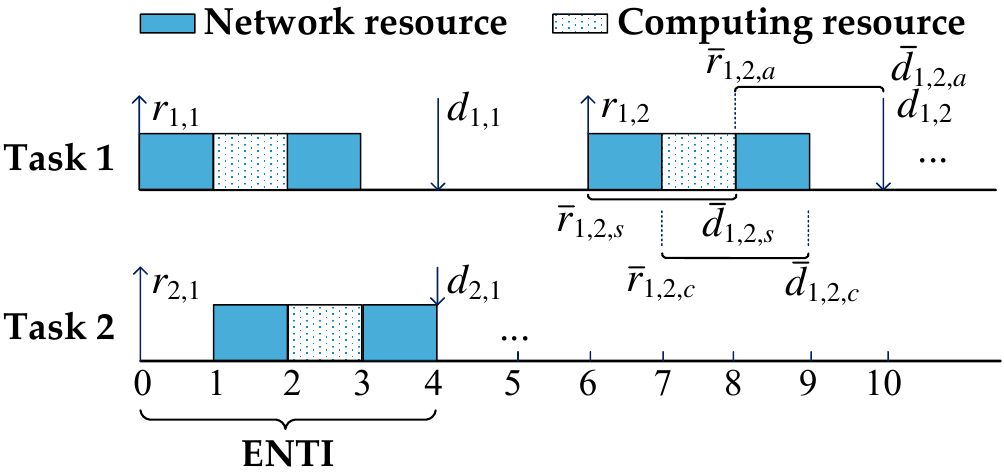}
  \caption{An instance of ENTI}
  \vspace{0.1in}
  \label{fig:eg1}
\end{figure}

\begin{example}
Consider two tasks $\tau_1 = \tau_2 = (6, 4, 1, 1, 1)$. As shown in Fig.~\ref{fig:eg1}, the instance $\tau_{1,2}$ satisfies $r_{1,2} = 6$ and $d_{1,2} = 10$. The effective release times and deadlines of its  three segments are $\thickbar{r}_{1,2,s} = 6$,  $\thickbar{d}_{1,2,s} = 8$, $\thickbar{r}_{1,2,c} = 7$, $\thickbar{d}_{1,2,c} = 9$, $\thickbar{r}_{1,2,a} = 8$, and $\thickbar{d}_{1,2,a} = 10$. The effective network demand over $[0, 4]$ is $4$, so the interval $[0, 4]$ is an ENTI.   
\end{example}

Since the CRS problem under the general model is NP-hard, in the following we first focus on the design of exact algorithms for the first two simple models and then present an effective heuristic solution for the general model.

\eat{\begin{definition} 
\label{def:EDF}
{\bf Earliest Deadline First:} Given a set of real-time composite tasks \(\{\tau_1, \tau_2,..., \tau_n\}\), Earliest Deadline First (EDF) executes the task with the earliest absolute deadline among all the ready tasks, breaking deadline ties according to the shortest remaining processing time.
\end{definition}}

\section{CRS Problem under $H\text{-}1\text{-}1$ Model}\label{sec:CRS-111}

The $h\text{-}1\text{-}1$ model represents a wide range of NCSs which need to transport a large amount of sensing data while the required computing and actuation time are short.  In the following, after giving several important observations, we present an exact optimal algorithm to solve it.

%\setlength{\textfloatsep}{10pt}

%\subsection{Design of the CRS algorithm under the $h\text{-}1\text{-}1$ model}

%In the following, we first present several important definitions and observations, and then describe the algorithm details. 

Based on the effective timing parameters, we employ EDF to schedule the real-time composite tasks in the following fashion. We utilize a network ready queue and a computing ready queue for scheduling the network segments and computing segments, respectively. Once a sensing segment $\tau_{i,j,s}$ is finished at $f_{i,j,s}$, its computing segment $\tau_{i,j,c}$ is released at $f_{i,j,s}$. Similarly, if a computing segment $\tau_{i,j,c}$ is finished at $f_{i,j,c}$, its actuating segment $\tau_{i,j,c}$ is released at $f_{i,j,c}$. The actuating segments will be scheduled together with all other network segments (including both sensing and actuating segments from other task instances) using EDF. The effective deadlines of the sensing and actuating segments are used to decide their priorities. Ties are broken by scheduling the sensing segments first. For the same type of segments, ties are broken by scheduling the segment with the least  laxity.

\begin{lemma}
\label{lem:3_stage_m_1_1}
Given a real-time composite task set $\mathcal{T}$ under the $h\text{-}1\text{-}1$ model, if EDF can find a feasible network schedule for the sensing segments based on their effective deadlines, it can also find a feasible computing schedule based on the effective deadlines of their computing segments.     
\end{lemma}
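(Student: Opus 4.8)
The plan is to reduce the existence of a feasible computing schedule to a simple interval-counting (Hall-type) condition, and then to exploit the single-channel nature of the network to verify that condition automatically. First I would record the relevant effective parameters in the $h\text{-}1\text{-}1$ model from Definition~\ref{def:vir}: since $C_{i,a}=C_{i,c}=1$, the sensing segment has effective deadline $\bar{d}_{i,j,s}=d_{i,j}-2$ while the computing segment has effective deadline $\bar{d}_{i,j,c}=d_{i,j}-1$, a gap of exactly one slot. Because the given network schedule is feasible for the sensing segments, each sensing segment completes at some time $f_{i,j,s}\le\bar{d}_{i,j,s}=d_{i,j}-2$, and its computing segment $\tau_{i,j,c}$ is released at exactly $f_{i,j,s}$ with unit processing requirement and deadline $d_{i,j}-1$. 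The immediate consequence I want is that every computing job individually admits at least one legal slot, since $f_{i,j,s}\le d_{i,j}-2=\bar{d}_{i,j,c}-1$.

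Next I would set up the standard feasibility test for unit-length jobs on the single CPU. Scheduling the computing segments amounts to assigning each unit job to a distinct time slot drawn from its admissible window $\{f_{i,j,s},\dots,d_{i,j}-2\}$, and such an assignment exists if and only if, for every interval $[t_0,t_1]$, the number of computing jobs whose admissible window lies inside $[t_0,t_1]$ is at most $t_1-t_0$. Since EDF is optimal for unit-length jobs with integer release times and deadlines, establishing this demand condition is exactly what the lemma requires.

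The core step is bounding that interval demand using the sensing schedule. Fix $[t_0,t_1]$ and consider any computing job counted there, i.e.\ one with $f_{i,j,s}\ge t_0$ and $d_{i,j}-1\le t_1$. Combining $d_{i,j}\le t_1+1$ with $f_{i,j,s}\le d_{i,j}-2$ yields $f_{i,j,s}\le t_1-1$, so the sensing segment of every counted job completes at an integer time in $\{t_0,\dots,t_1-1\}$. Here is where the single-channel assumption does the work: the network transmits at most one unit per slot, so at most one sensing segment can have its last unit transmitted in any given slot; hence the completion times $f_{i,j,s}$ are pairwise distinct, and at most $t_1-t_0$ of them can fall among the $t_1-t_0$ integer points of $\{t_0,\dots,t_1-1\}$. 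Therefore the computing demand over $[t_0,t_1]$ is at most $t_1-t_0$, the condition holds for every interval, and EDF produces a feasible computing schedule.

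I expect the main obstacle to be the translation step: arguing that the arbitrary-window computing demand is controlled by the number of sensing completions in a slightly shrunk window, and then invoking the single-channel property to cap that count. The one-slot gap between $\bar{d}_{i,j,s}$ and $\bar{d}_{i,j,c}$ (which is precisely $C_{i,a}=1$) is what makes the shrink from $\{t_0,\dots,t_1\}$ to $\{t_0,\dots,t_1-1\}$ go through, so I would keep the off-by-one bookkeeping honest; everything else reduces to the classical optimality of EDF for unit-length jobs.
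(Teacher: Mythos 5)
Your proof is correct, and it rests on the same two load-bearing facts as the paper's own (very short) argument: the single-channel network forces the sensing finish times $f_{i,j,s}$ to be pairwise distinct, and the unit size of the computing segments turns that distinctness into absence of CPU contention. Where you differ is in how those facts are converted into feasibility. The paper argues constructively in one step: since the computing jobs are released at distinct times and are unit-length, one can simply run each computing segment in the slot immediately following its sensing segment's completion; distinctness means no two such slots collide, and $f_{i,j,s}\leq d_{i,j}-2$ means the deadline $d_{i,j}-1$ is met (moreover, because no backlog ever forms, this immediate-service schedule is exactly what any work-conserving policy, EDF included, produces). You instead route the argument through the Hall-type interval-demand criterion for unit jobs --- bounding the number of computing jobs confined to any $[t_0,t_1]$ by the number of distinct sensing completions in $\{t_0,\dots,t_1-1\}$ --- and then invoke the classical optimality of EDF for unit-length jobs. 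The paper's route is shorter and fully elementary; yours is more systematic, in that it cleanly separates ``a feasible schedule exists'' from ``EDF finds it,'' and it exercises the same demand-counting viewpoint (demand over an interval versus interval length) that the rest of the paper builds its ENOI/ENTI machinery on, so it integrates naturally with the later lemmas. One cosmetic slip: the one-slot gap between $\thickbar{d}_{i,j,s}$ and $\thickbar{d}_{i,j,c}$ equals $C_{i,c}$, not $C_{i,a}$ (both are $1$ in the $h\text{-}1\text{-}1$ model, so nothing in your arithmetic is affected).
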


It is straightforward to prove the correctness of Lemma~\ref{lem:3_stage_m_1_1}. As the finish times of the sensing segments are different, the release times of their corresponding computing segments are different as well. Given that each computing segment takes unit-size execution time, we can always schedule the computing segments immediately after the completion of its corresponding sensing segments. Lemma~\ref{lem:3_stage_m_1_1} indicates that scheduling the sensing segments  does not interfere with scheduling the computing segments. However, it is hard to guarantee that EDF can also construct a feasible schedule for the actuating segments as they compete for the network resource with the sensing segments. Therefore, in the following we focus on adapting EDF to schedule the sensing and actuating segments.

The following lemma gives a necessary condition to determine the schedulability of the real-time composite task set under the $h\text{-}1\text{-}1$ model. 

\begin{lemma}
\label{lem:schedulability_m_1_1}
Given a real-time composite task set $\mathcal{T}$ under the $h\text{-}1\text{-}1$ model, if there exists an effective network overload interval (ENOI), the task set is unschedulable.     
\end{lemma}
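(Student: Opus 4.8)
The plan is to argue by contradiction using a demand-counting (pigeonhole) argument on the single network channel, in the spirit of the classical processor-demand feasibility criterion. Suppose for contradiction that the task set is schedulable, so there exists a feasible composite schedule satisfying Constraints~\eqref{eq:1}--\eqref{eq:4}, and that simultaneously some interval $[t_0,t_1]$ is an ENOI. I will show these two assumptions are incompatible, which gives the necessary condition in contrapositive form.

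First I would establish the key structural claim: in any feasible schedule, every network segment that is effectively included in $[t_0,t_1]$ must have \emph{all} of its units scheduled inside $[t_0,t_1]$. For a sensing segment $\tau_{i,j,s}$, Constraint~\eqref{eq:1} gives $f_{i,j,s,1}\ge r_{i,j}+1$, so no unit starts before $\thickbar{r}_{i,j,s}=r_{i,j}$; and chaining the order constraints \eqref{eq:2} through the computing and actuating segments yields $f_{i,j,a,C_{i,a}}\ge f_{i,j,s,C_{i,s}}+C_{i,c}+C_{i,a}$, which combined with the deadline bound $f_{i,j,a,C_{i,a}}\le r_{i,j}+D_i$ forces $f_{i,j,s,C_{i,s}}\le d_{i,j}-C_{i,c}-C_{i,a}=\thickbar{d}_{i,j,s}$. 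Hence every unit of $\tau_{i,j,s}$ lies in $[\thickbar{r}_{i,j,s},\thickbar{d}_{i,j,s}]$. The symmetric chain for an actuating segment $\tau_{i,j,a}$, using that its sensing and computing predecessors must finish first, confines it to $[\thickbar{r}_{i,j,a},\thickbar{d}_{i,j,a}]$. Since ``effectively included in $[t_0,t_1]$'' means exactly $\thickbar{r}\ge t_0$ and $\thickbar{d}\le t_1$ (Definition~\ref{def:vir}), each such effective window is a subset of $[t_0,t_1]$, so every unit of each of these segments is pinned inside $[t_0,t_1]$.

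Next I would count. Because the network is single-channel, Constraint~\eqref{eq:3} forbids two network units from occupying the same slot, so at most one network unit is placed in each of the $t_1-t_0$ integer slots of $[t_0,t_1]$; the supply is thus exactly $t_1-t_0$. But the previous step forces into $[t_0,t_1]$ a number of network units equal to the effective network demand over $[t_0,t_1]$, which by the ENOI hypothesis (Definition~\ref{def:tig}) is strictly greater than $t_1-t_0$. This is an immediate pigeonhole contradiction: more units must fit into the slab than there are slots. Therefore no feasible schedule exists, and the task set is unschedulable.

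The main obstacle is the structural claim of the second paragraph: making rigorous that the effective release times and deadlines of Definition~\ref{def:vir} are genuine \emph{hard} bounds respected by every feasible schedule, rather than mere surrogates. Everything hinges on threading the precedence constraints~\eqref{eq:2} together with the endpoint constraints~\eqref{eq:1} so that the slack reserved for the downstream (for sensing) or upstream (for actuating) segments is accounted for exactly; once that confinement is secured, the counting step is routine. I would also note that this necessary condition does not actually use the $h\text{-}1\text{-}1$ specialization $C_{i,c}=C_{i,a}=1$ and holds for arbitrary segment sizes; the restriction to $h\text{-}1\text{-}1$ matters only later, when showing that the absence of an ENOI is also \emph{sufficient} for schedulability.
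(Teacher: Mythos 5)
Your proof is correct and follows essentially the same reasoning as the paper's own (much terser) argument: the effective release time/deadline are the earliest possible start and latest possible finish of a network segment, so all segments effectively included in an ENOI must execute entirely within it, and the single-channel capacity of $t_1 - t_0$ slots cannot accommodate a demand exceeding $t_1 - t_0$. Your version simply makes rigorous, via Constraints~\eqref{eq:1}--\eqref{eq:3}, what the paper states in two sentences, and your closing remark that the argument never uses the $h\text{-}1\text{-}1$ restriction is also consistent with the paper, which reuses ENOIs/ECOIs as infeasibility certificates under the general model.
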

\begin{proof}
The effective deadline and effective release time of a network segment correspond to that segment's latest possible deadline and earliest possible release time. If there exists an ENOI, it indicates that there is no sufficient network resource to schedule those network segments within that interval. Thus the task set is not schedulable.
\end{proof}

An effective network tight interval (ENTI) indicates that this interval has a 100\% network resource utilization in any feasible schedule. Thus, it is important to utilize the ENTIs to move up the deadline of the tasks whose deadlines are within the interval but their network segments are not effectively included in the interval.

The following definition and lemma show that EDF can find the feasible schedule if there exists one after applying the ENTIs to modify the task deadlines.

\begin{definition}
\label{def:fail}
{\bf Continuous Interval:} Given a real-time composite task set $\mathcal{T}$ under the $h\text{-}1\text{-}1$ model, suppose $\tau_{z,j,{\rm net}}$ is the first network segment scheduled by EDF to miss its deadline $\thickbar{d}_{z,j,{\rm net}}$ and $b_0$ is the start time of a network segment in the schedule, we define $[b_0, \thickbar{d}_{z,j,{\rm net}}]$ to be a continuous interval if it is fully utilized by the network segments and there exists no network segment scheduled in the interval $[b_0-1, b_0]$ or the network segment scheduled in $[b_0-1, b_0]$ has a deadline later than $\thickbar{d}_{z,j,{\rm net}}$.%, which is shown in~\ref{fig:interval}. \eat{\han{May consider to rename this continuous interval. It is also better to add a figure to illustrate these two cases in the definition.}}
\end{definition}

\begin{lemma}
\label{lem: tight_interval_proof}
Given a schedulable real-time composite task set $\mathcal{T}$ under the $h\text{-}1\text{-}1$ model, if EDF fails to find a feasible network schedule, there exists exactly one actuating segment scheduled in the continuous interval but not effectively included in the interval, where the continuous interval is an ENTI. 
\end{lemma}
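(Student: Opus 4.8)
The plan is to run a processor-demand (busy-period) argument for EDF, with the essential twist that an actuating segment's \emph{actual} release time --- the instant at which its sensing and computing segments finish --- can be strictly later than its effective release time $\thickbar{r}_{i,j,a}$ from Definition~\ref{def:vir}. First I would record the consequence of schedulability: by Lemma~\ref{lem:schedulability_m_1_1} there is no ENOI, so over \emph{every} interval $[t_0,t_1]$ the effective network demand is at most $t_1-t_0$. I would then set $\bar{d}:=\thickbar{d}_{z,j,{\rm net}}$ and work entirely inside the continuous interval $[b_0,\bar{d}]$ of Definition~\ref{def:fail}, which by construction is fully utilized by network segments and whose preceding slot $[b_0-1,b_0]$ is either idle or carries a segment with deadline later than $\bar{d}$.

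Next I would classify every network segment that executes inside $[b_0,\bar{d}]$. A standard EDF exchange argument, leaning on the defining property of $b_0$, shows that each such segment has effective deadline at most $\bar{d}$. I would then split these segments by whether their effective release time is $\geq b_0$ (effectively included) or $<b_0$. The key structural observation is that a sensing segment has actual release equal to its effective release $r_{i,j}$; hence a sensing segment with effective release $<b_0$ and deadline $\leq\bar{d}$ would have been available and runnable during $[b_0-1,b_0]$, contradicting the choice of $b_0$. Consequently, any segment that runs in the interval but has effective release $<b_0$ must be an \emph{actuating} segment that got pushed into the interval because its sensing finished late.

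The heart of the matter --- and the step I expect to be the main obstacle --- is bounding the number of such anomalous actuating segments by one. For any of them, the actual release $f_{i,j,s}+1\geq b_0$ forces its sensing to finish at $f_{i,j,s}\geq b_0-1$; on the other hand, that sensing segment has effective release $r_{i,j}<b_0$, so by the same exchange argument it cannot execute inside $[b_0,\bar{d}]$, which forces $f_{i,j,s}\leq b_0-1$. Thus $f_{i,j,s}=b_0-1$ exactly, meaning the sensing segment occupies the single network slot ending at $b_0-1$. Since the network is single-channel, at most one segment can finish in that slot, so at most one anomalous actuating segment can exist.

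Finally I would close with a demand balance on $[b_0,\bar{d}]$. Every effectively-included segment (total effective demand $E$) and every anomalous actuating segment (unit size, count $k\leq1$) has actual release $\geq b_0$ and deadline $\leq\bar{d}$, so all of their work is confined to $[b_0,\bar{d}]$, whose supply is $\bar{d}-b_0$. I would argue that the missing segment $\tau_{z,j,{\rm net}}$ is itself effectively included (its deadline is $\bar{d}$, and an effective release $<b_0$ is impossible by the arguments above), so since it misses, at least one unit of effectively-included work goes undone inside the fully-busy interval; combining ``fully busy'' ($\mathrm{(eff\text{-}incl\ done)}+k=\bar{d}-b_0$) with $\mathrm{eff\text{-}incl\ done}\le E-1$ gives $E\geq\bar{d}-b_0+1-k$, and with the no-ENOI bound $E\leq\bar{d}-b_0$ and $k\leq1$, integrality forces $k=1$ and $E=\bar{d}-b_0$. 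Hence $[b_0,\bar{d}]$ is an ENTI and exactly one actuating segment is scheduled in it without being effectively included. I expect the delicate bookkeeping to be (i) verifying that this anomalous segment is distinct from $\tau_{z,j,{\rm net}}$ (the former completes its single unit, the latter does not), and (ii) justifying the classification steps purely from the $b_0$ property rather than from a global EDF-feasibility claim, since the no-ENOI hypothesis is what must carry the feasibility weight.
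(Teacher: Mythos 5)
Your argument follows the paper's own proof in all essentials: the observation that a sensing segment scheduled in the continuous interval cannot have effective release before $b_0$ (the paper's Case~2), the single-channel argument pinning the sensing segment of any anomalous actuating segment to the slot ending at $b_0-1$, which bounds the anomalous count by one (the paper's Case~3, phrased there via distinctness of finish times), and a demand balance against Lemma~\ref{lem:schedulability_m_1_1} to force the count to be exactly one (the paper's Case~1). Recasting the three-case contradiction as a direct count is sound, and those parts of your proof are correct.

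The genuine gap is the parenthetical claim that the failing segment is effectively included, i.e.\ that ``an effective release $<b_0$ is impossible by the arguments above.'' Those arguments apply to segments \emph{scheduled} in $[b_0,\bar d\,]$, and to sensing segments, whose EDF readiness begins at their effective release. The failing segment is not scheduled in the interval, and if it is an \emph{actuating} segment its readiness is governed by its actual release $f_{z,j,s}+1$, not by $\thickbar{r}_{z,j,a}$; the $[b_0-1,b_0]$ argument then forces only $f_{z,j,s}=b_0-1$, exactly as for your anomalous segments, and is perfectly consistent with $\thickbar{r}_{z,j,a}<b_0$. In that situation the failing segment's own sensing occupies the unique slot ending at $b_0-1$, so by your own slot argument no \emph{other} anomalous actuating segment exists; hence $k=0$ among segments scheduled in the interval, your balance closes as $E=\bar d-b_0$ with no contradiction, and the conclusion ``exactly one'' is never reached. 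This case is real, not bookkeeping: take unit jobs $(\text{release},\ \text{absolute deadline})$ under the $1\text{-}1\text{-}1$ model, $Z=(0,8)$, $V_1=(0,4)$, $V_2=(1,5)$, $F_1=(6,9)$, $F_2=(7,10)$, and write $X_s,X_c,X_a$ for the segments of job $X$. EDF (with the paper's sensing-first tie-break) runs $V_{1,s},V_{2,s},V_{1,a},V_{2,a},Z_s$ in $[0,5]$, idles in $[5,6]$, runs $F_{1,s},F_{2,s}$ in $[6,8]$, and $Z_a$ (effective release $2$, deadline $8$) misses at $8$; the continuous interval is $[6,8]$, it is an ENTI, and both segments scheduled in it are effectively included. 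Yet the job set is feasible: $Z_s[0,1]$, $V_{1,s}[1,2]$, $V_{2,s}[2,3]$, $V_{1,a}[3,4]$, $V_{2,a}[4,5]$, $Z_a[5,6]$, $F_{1,s}[6,7]$, $F_{2,s}[7,8]$, $F_{1,a}[8,9]$, $F_{2,a}[9,10]$.

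That configuration violates only the paper's synchronous-release assumption, so any correct proof of the missing step must actually use that assumption --- and neither your proof nor the paper's ever does. Indeed, the paper's Case~1 has the identical hole: it needs $\tau_{z,j,{\rm net}}$ to be effectively included in order to conclude an ENOI, which is unjustified exactly when the failing segment is an actuating segment with $\thickbar{r}_{z,j,a}<b_0$. So you have reproduced the paper's argument, weakness included; but the item you deferred as ``delicate bookkeeping~(i)'' --- distinguishing the anomalous segment from $\tau_{z,j,{\rm net}}$ --- is precisely where the written argument breaks down, and closing it requires a new idea (or a restatement of the lemma that counts the failing segment itself among the anomalous actuating segments, which is in effect what Alg.~\ref{alg1}'s modification rule does, since it keys on a task's deadline lying in the ENTI rather than on a segment being scheduled inside it).
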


\begin{proof}

Based on proof by contradiction, we assume that one of the following cases hold. 
\begin{itemize}[leftmargin=*]
\item \textbf{Case 1}: the network segments scheduled in the continuous interval are all effectively included in the interval.
\item \textbf{Case 2}: there exists exactly one sensing segment scheduled in the continuous interval but not effectively included in the interval.
\item \textbf{Case 3}: there exist at least two network segments scheduled in the continuous interval but not effectively included in the interval.
\end{itemize}

Suppose that EDF generates the network schedule $\mathcal{S}_{{\rm net}}$. Consider that EDF fails to find a feasible network schedule. Let $\tau_{j,{\rm net}}$ be the first network segment scheduled by EDF that misses its deadline $\thickbar{d}_{z,j,{\rm net}}$. There exists the continuous interval $[b_0, \thickbar{d}_{z,j,{\rm net}}]$ which is fully utilized by the network segments. There exists no network segment scheduled in the interval $[b_0-1, b_0]$ or the network segment scheduled in it has a deadline later than $\thickbar{d}_{z,j,{\rm net}}$. 

\vspace{0.05in}
{\bf If case 1 holds}, all the network segments scheduled in $[b_0, \thickbar{d}_{z,j,{\rm net}}]$ are effectively included in the interval, with the network segment $\tau_{z,j,{\rm net}}$ considered, the interval $[b_0, \thickbar{d}_{z,j,{\rm net}}]$ is an ENOI, which is a contradiction to that $\mathcal{T}$ is schedulable according to Lemma \ref{lem:schedulability_m_1_1}.

\vspace{0.05in}
{\bf If case 2 holds}, since there exists exactly one sensing segment scheduled in the continuous interval but not effectively included in the interval, this sensing segment should be scheduled in $[b_0-1, b_0]$ according to EDF. This contradicts to the definition of continuous interval.

\vspace{0.05in}
{\bf If case 3 holds}, there exist at least two network segments whose corresponding effective release times are smaller than $b_0$ among the network segments scheduled in $[b_0, \thickbar{d}_{z,j,{\rm net}}]$. If at least one of the two network segments is a sensing segment, the interval $[b_0-1, b_0]$ must be utilized by this sensing segment, which is a contradiction. Therefore, we consider the case that both of them are actuating segments. Let $\tau_{k,x,a}$ and $\tau_{g,y,a}$ be the two actuating segments satisfiying $\thickbar{r}_{k,x,a}, \thickbar{r}_{g,y,a} < b_0$. For their corresponding sensing segments $\tau_{k,x,s}$ and $\tau_{g,y,s}$, it must hold that $\thickbar{r}_{k,x,s}, \thickbar{r}_{g,y,s} < b_0-1$. Thus, if any of the two sensing segments $\tau_{k,x,s}$ and $\tau_{g,s}$ are scheduled in $[b_0, \thickbar{d}_{z,j,{\rm net}}]$, it is a contradiction that the interval $[b_0-1, b_0]$ is not utilized or by a network segment with the deadline later than $\thickbar{d}_{z,j,{\rm net}}$. Hence, both $\tau_{k,x,s}$ and $\tau_{g,y,s}$ must be scheduled before $b_0-1$. Since the finish times obtained by EDF must be different, at least one of $f_{k,x,s}$ and $f_{g,y,s}$ is smaller than $b_0-1$. Therefore, at least one of the actual release times of $\tau_{k,a}$ and $\tau_{g,y,a}$ is smaller than $b_0$, which allows one of $\tau_{k,x,a}$ and $\tau_{g,a}$ to be scheduled in $[b_0-1, b_0]$. This leads to a contradiction to our assumption. Therefore, there exists exactly one actuating segment whose effective release time is smaller than $b_0$. With the network segment $\tau_{z,j,{\rm net}}$ considered, the interval $[b_0, \thickbar{d}_{z,j,{\rm net}}]$ is an ENTI. We thus proved that there exists exactly one actuating segment which is scheduled in the ENTI $[b_0, \thickbar{d}_{z,j,{\rm net}}]$ but not effectively included in the interval if EDF fails.
\end{proof}

Lemma~\ref{lem: tight_interval_proof} indicates that as long as the actuating segments which do not belong to a given ENTI are prevented from being scheduled inside that interval, EDF can construct a feasible network schedule if it exists. 

Based on Lemma~\ref{lem: tight_interval_proof}, Alg.~\ref{alg1} gives an overview of the CRS algorithm under the $h\text{-}1\text{-}1$ model. The algorithm first constructs the effective timing parameters for all the segments. Based on these effective release times and deadlines, the algorithm identifies every ENTI and modifies the effective timing parameters accordingly. To ensure that the search of the ENTIs is complete, we traverse the effective release times of all sensing segments in the descending order, and the effective deadlines of all actuating segments in the ascending order to construct all candidate intervals. If an ENTI $[t_0, t_1]$ is found, we check all the tasks to modify their deadlines based on the following rule: if the deadline is included in $[t_0, t_1]$ and 
the effective release time of its actuating segment is smaller than $t_0$, we set its deadline to $t_0$. Additionally, the algorithm identifies ENOIs and returns a None value to indicate that this task set is unscheduable. This procedure (line 3-16 in Alg.~\ref{alg1}) repeats until all the ENTIs are identified, which has a time complexity of $O(N^3)$, where $N=\sum_{i=1}^{n} \mathcal{H}/T_i$ is the total number of instances obtained from all the tasks. After that, we employ EDF to construct a feasible schedule. The overall time complexity of Alg.~\ref{alg1} is $O(N^3)$.

\begin{algorithm}[!t]
\small
    \DontPrintSemicolon
    \SetKwInOut{Input}{Input}
    \SetKwInOut{Output}{Output}
    \Input{A real-time composite task set $\mathcal{T}=\{\tau_i\}_{i=1}^n$}
    \Output{A network schedule $\mathcal{S}_{\rm net}$ and a computing schedule $\mathcal{S}_{\rm com}$,  if exist}
    \vspace{0.05in}
    %\algsetup{linenosize=\tiny} \small

    Compute the hyper-period $\mathcal H$ and build the set $I$ of instances of $\mathcal{T}$\;
    Construct the set $O$ of ENOIs and ENTIs\;
    \While{$O \neq \emptyset $}
    {
        
        \If{$\alpha \in O$ is an ENOI}
        { 
            \KwRet None
            \tcp{Algorithm reports a failed case} 
        }
        \If{$\lambda \in O$ is an ENTI $[t_0, t_1]$ }
        { 
            \For{$\tau_{i,j} \in I$}
            {

                \If{$d_{i,j} \in \lambda$ and $\thickbar{r}_{i,j,a} <  t_0$}
                {
                    $d_{i,j} = t_0$\;
                }
                
            }
            $O = O - \{\lambda\}$
        }
        Update the set $O$ based on the modified timing parameters\;
        
    }
    
    Use EDF to schedule the network and computing segments in parallel based on their effective deadlines to obtain the schedules $\mathcal{S}_{{\rm net}}$, and $\mathcal{S}_{{\rm com}}$\;

    \KwRet $\mathcal{S}_{{\rm net}}$, $\mathcal{S}_{{\rm com}}$\;

    \caption{CRS Algorithm ($h\text{-}1\text{-}1$ Model)}
    \label{alg1}
\end{algorithm}

The following theorem proves the correctness of the CRS algorithm under the $h\text{-}1\text{-}1$ model.
\begin{theorem}
If there exists a feasible schedule for the real-time composite task set under the $h\text{-}1\text{-}1$ model, then Alg.~\ref{alg1} can find it. 
\end{theorem}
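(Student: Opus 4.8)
The plan is to split the argument into two essentially independent parts: first, that the deadline modifications carried out in the while-loop (lines~3--16) never destroy feasibility and never trigger a spurious \textsc{None}; second, that once the loop halts EDF is guaranteed to succeed on the resulting task set. These combine to give the theorem, because deadlines are only ever moved earlier, so any schedule EDF builds for the tightened deadlines also respects the original ones, and by Lemma~\ref{lem:3_stage_m_1_1} a feasible network schedule for the sensing/actuating segments automatically extends to a feasible computing schedule $\mathcal{S}_{\rm com}$.

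For the first part I would show that each single modification is safe and then conclude by induction over the modifications the loop performs. Fix an ENTI $[t_0,t_1]$ and an instance $\tau_{i,j}$ with $d_{i,j}\in[t_0,t_1]$ and $\thickbar{r}_{i,j,a}<t_0$ whose deadline is reset to $t_0$. By Definition~\ref{def:tig} the effective network demand of the segments effectively included in $[t_0,t_1]$ equals $t_1-t_0$; since effective release and effective deadline are the earliest start and latest finish allowed by Constraints~\eqref{eq:1}--\eqref{eq:2}, in \emph{every} feasible schedule those included segments lie inside $[t_0,t_1]$ and hence occupy all $t_1-t_0$ network slots of the interval. The unit-size actuating segment $\tau_{i,j,a}$ is not effectively included (its effective release precedes $t_0$), so it cannot be placed in $(t_0,t_1]$; as it must still finish by $d_{i,j}\le t_1$, it is forced to finish at or before $t_0$ in any feasible schedule. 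Thus resetting $d_{i,j}$ to $t_0$ (which through Definition~\ref{def:vir} also tightens the effective deadlines of its sensing and computing segments accordingly) removes no feasible schedule. Applying this along the whole sequence of modifications shows the final task set is schedulable iff the original is; in particular, if the input is schedulable every intermediate set is schedulable, so by Lemma~\ref{lem:schedulability_m_1_1} none contains an ENOI and the loop never returns \textsc{None}.

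For the second part I would invoke Lemma~\ref{lem: tight_interval_proof} on the final (still schedulable) task set. Suppose for contradiction EDF fails; then there is a continuous interval $[b_0,\thickbar{d}_{z,j,{\rm net}}]$ that is an ENTI in which exactly one actuating segment $\tau_{k,x,a}$ is scheduled without being effectively included, with $\thickbar{r}_{k,x,a}<b_0$. Because $\tau_{z,j,{\rm net}}$ is the \emph{first} segment to miss, $\tau_{k,x,a}$ meets its own deadline, so its finish time lies in $(b_0,\thickbar{d}_{z,j,{\rm net}}]$ and hence $d_{k,x}=\thickbar{d}_{k,x,a}>b_0$; moreover $d_{k,x}\le\thickbar{d}_{z,j,{\rm net}}$, since otherwise EDF would have serviced the earlier-deadline, still-starving $\tau_{z,j,{\rm net}}$ ahead of $\tau_{k,x,a}$. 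Thus $d_{k,x}\in[b_0,\thickbar{d}_{z,j,{\rm net}}]$ together with $\thickbar{r}_{k,x,a}<b_0$ is exactly the configuration the loop eliminates: when it processed this ENTI it reset $d_{k,x}$ to $b_0$, forcing $\thickbar{d}_{k,x,a}=b_0$ and making it impossible to schedule $\tau_{k,x,a}$ after $b_0$ — a contradiction. Hence EDF succeeds, and with Lemma~\ref{lem:3_stage_m_1_1} we obtain a feasible composite schedule.

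The main obstacle I anticipate lies in the second part, in two places. First, the inequality $d_{k,x}\le\thickbar{d}_{z,j,{\rm net}}$ must be derived rigorously from the structure of the continuous interval: I would show every segment scheduled inside $[b_0,\thickbar{d}_{z,j,{\rm net}}]$ has deadline at most $\thickbar{d}_{z,j,{\rm net}}$, using the defining property that $[b_0-1,b_0]$ is idle or holds a later-deadline segment together with the fact that $\tau_{z,j,{\rm net}}$ is continuously backlogged — a standard but non-trivial EDF busy-period argument. Second, and more delicate, I must certify that the ENTI exhibited by Lemma~\ref{lem: tight_interval_proof} for the \emph{final} task set was actually processed by the loop: since each modification changes the effective demands and may create or dissolve ENTIs, I would have to prove the iterative recomputation of $O$ is confluent, i.e.\ that halting with $O=\emptyset$ leaves no ENTI containing an un-expelled actuating segment. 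Establishing this invariant across the entire sequence of modifications, rather than the single-interval safety argument, is where the real work lies.
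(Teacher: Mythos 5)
Your proposal follows essentially the same route as the paper's proof: tighten task deadlines at ENTIs, then argue via Lemma~\ref{lem: tight_interval_proof} (in contrapositive) that EDF cannot fail on the modified task set, and close with Lemma~\ref{lem:3_stage_m_1_1} for the computing schedule. In fact, your write-up is more careful than the paper's own three-sentence proof in two places. First, your Part 1 --- showing each deadline reset is forced in every feasible schedule, so the modifications preserve feasibility and never trigger a spurious \textsc{None} --- is needed but never argued in the paper: Lemma~\ref{lem: tight_interval_proof} is stated only for a \emph{schedulable} task set, so one must know the modified set is still schedulable before invoking it on the post-loop parameters, which is exactly what your Part 1 supplies. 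Second, the ``confluence'' issue you flag is genuine: processing one ENTI can move a task's deadline from beyond an earlier-processed, overlapping ENTI back inside it (with the actuating segment's effective release time still before that interval), re-creating the very configuration the earlier pass was supposed to eliminate, so the claim that termination with $O=\emptyset$ leaves no un-expelled actuating segment does require the invariant you describe about how Alg.~\ref{alg1} updates $O$. The paper's proof silently asserts precisely this claim (``the algorithm modifies the task deadlines so that there exist no actuating segments which would be scheduled in an ENTI if they are not effectively included''), so the one step you leave open is the one step the paper never proves either; on every step the paper does argue, your proposal matches or exceeds it.
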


\begin{proof}
The proposed CRS algorithm under the $h\text{-}1\text{-}1$ model modifies the task deadlines  so that there exist no actuating segments which would be scheduled in an ENTI if they are not effectively included in the interval. 
According to Lemma \ref{lem: tight_interval_proof}, if the single actuating segment which is not included in the ENTI is removed, the EDF is able to find a feasible network schedule $\mathcal{S}_{{\rm net}}$ for the network segments.
%Thus, according to Lemma \ref{lem: tight interval}, EDF is able to find a feasible network schedule $\mathcal{S}_{{\rm net}}$ for the network segments. 
Besides, according to Lemma \ref{lem:3_stage_m_1_1}, the computing schedule $\mathcal{S}_{{\rm com}}$ must be feasible as well. %In this way we proved that the feasible schedule can be found by Alg.~\ref{alg2} if the task set is schedulable.  
\end{proof}

 %Analyzing the time complexity for the scheduling phase is similar to the push backward phase, which is also $O(n^3)$.

\section{CRS Problem under $1\text{-}M\text{-}1$ Model}\label{sec:CRS-1m1}

We now extend the study to the  $1\text{-}m\text{-}1$ model, where every real-time composite task has unit-size execution time for its sensing/actuating segments and arbitrary execution time larger than 1 for its computing segment. This model represents a wide range of NCSs which employ online optimization methods so that their computing time is longer than the transmission time for sensing and actuating.

Under the $1\text{-}m\text{-}1$ model, we face the same difficulty as in the $h\text{-}1\text{-}1$ model. That is, we need to guarantee that the sensing and actuating segments will not be scheduled in an interval with the network resource utilization larger than 100\%. However, different from the $h\text{-}1\text{-}1$ model, infeasible schedule under the $1\text{-}m\text{-}1$ model can also be generated by scheduling the computing segments improperly. Thus, it is important to take into account the scheduling of computing segments in the algorithm design as well to solve the CRS problem.   

In this section, we present an exponential-time optimal algorithm to solve the CRS problem under the $1\text{-}m\text{-}1$ model based on a novel backtracking strategy. The effectiveness of the proposed algorithm has been validated through extensive experimental results (see Section~\ref{sec:evaluation}).

\subsection{Algorithm Overview}

Alg.~\ref{alg2} gives an overview of the CRS algorithm under the $1\text{-}m\text{-}1$ model, which utilizes an iterative two-stage decomposition method. We decompose the CRS problem into two sub-problems including the computing scheduling sub-problem and the network scheduling sub-problem. The variables of the original CRS problem are also divided into a subset of computing variables and a subset of network variables. The first-stage sub-problem is solved over the computing variables. The values of the network variables are determined in the second-stage sub-problems based on the given first-stage solution. If the subsequent sub-problem determines that the previous stage' decisions lead to infeasible schedules, then new constraint(s) will be added to the original CRS problem, which is re-solved until no new constraints can be added.  Failure will be reported if no feasible composite schedule can be constructed.

Based on the algorithm framework above, it is important to guarantee that the new constraints added in each iteration will not jeopardize the schedulability of the original CRS problem. To tackle this challenge, we design a constraint generator based on a novel backtracking strategy to add new constraints to the CRS problem. This is achieved by modifying the timing parameters of the tasks in the interval(s) with the network resource utilization larger than 100\% based on the iterative two-stage decomposition method.

\subsection{Design details of the CRS algorithm under $1\text{-}m\text{-}1$ model}

%We first describe the decomposition method and then present the design of the constraint generator.

\subsubsection{Decomposition Method}
We now reformulate the original CRS problem as a two-stage scheduling problem.

\vspace{0.05in}
\noindent{\bf Computing Scheduling Sub-Problem:} Consider a real-time composite task set $\mathcal{T}$ and the hyper-period $\mathcal{H}$. The objective of the computing scheduling sub-problem is to find a feasible computing schedule $\mathcal{S}_{com}$ with a length of $\mathcal{H}$ if it exists so that the effective deadlines of all computing segments are met. 

\eat{
\begin{equation} \label{eq:constr3}
\setlength\abovedisplayskip{5pt}
\setlength\belowdisplayskip{5pt}
\begin{split}
&f_{i,j,c,1}\geq r_{i,j} + 2   \\ 
&f_{i,j,c, C_{i,c}}\leq d_i-C_{i,a} \\
&f_{i,j,c,k+1}\geq f_{i,j,c,k}+1, \quad   \forall k \in \mathbb{Z}\cap [1, C_{i,c}-1]\\
&\forall i\neq x,  \quad \forall p \in \mathbb{Z}\cap[1,C_{i,c}],  \quad \forall q \in \mathbb{Z}\cap[1,C_{j,c}]\\
& f_{i,j,c,p}\leq f_{x,y,c,q}-1 \quad  \text{or} \quad  f_{i,j,c,p}\geq f_{x,y,c,q}+1 
\end{split}
\end{equation}
where the computing variable is bounded by its earliest possible release time $r_{i,j} + 2$ and latest possible deadline $d_{i,j}-C_{i,a}$. After solving the computing scheduling sub-problem, we formulate the network scheduling sub-problem as follows. 
}

\vspace{0.05in}
\noindent{\bf Network Scheduling Sub-Problem:} Given a real-time composite task set $\mathcal{T}$, the hyper-period $\mathcal{H}$ and the computing schedule $\mathcal{S}_{com}$, the objective of the computing scheduling sub-problem is to find a feasible network schedule $\mathcal{S}_{net}$ with a length of $\mathcal{H}$ if it exists so that the network segments can meet the deadlines obtained based on the start times and finish times of the computing segments in $\mathcal{S}_{com}$.

\eat{
\vspace{0.025in}
\begin{equation} \label{eq:constr4}
\begin{split}
&f_{i,j,s,1}\geq r_{i,j} , \quad f_{i,j,a,1}\geq \hat{f}_{i,j,c, C_{i,c}}+1\\
&f_{i,j,s,1} \leq \hat{f}_{i,j,c,1} + 1, \quad f_{i,j,a,1}\leq d_{i,j}\\
&\forall i\neq u, \quad  \forall x,y \in\{a,s\} \\
&f_{i,j,x,1}\leq f_{u,v,y,1}-1 \quad  \text{or} \quad  f_{i,j,x,1}\geq f_{u,v,y,1}+1 
\end{split}
\end{equation}
\vspace{0.025in}

\noindent where $\hat{f}_{i,j,c,1}$ and $\hat{f}_{i,j,c, C_{i,c}}$ are obtained from any feasible solution of the computing scheduling sub-problem. 
\vspace{0.05in}
}

Since each sub-problem above can be taken as a uni-processor scheduling problem, it is intuitive to design a two-stage EDF to solve the CRS problem by solving the computing scheduling sub-problem in the first stage and the network scheduling sub-problem in the second stage. However, the two-stage EDF may not find a feasible schedule for the second-stage problem as it always employs a fixed first-stage solution. Therefore, we utilize a constraint generator to add new constraints to the original CRS problem whenever two-stage EDF finds an infeasible schedule in the second-stage problem.

\setlength{\textfloatsep}{10pt}
\begin{algorithm}[htb]
\small
    \DontPrintSemicolon
    \SetKwInOut{Input}{Input}
    \SetKwInOut{Output}{Output}
    \Input{A real-time composite task set $\mathcal{T}=\{\tau_i\}_{i=1}^n$}
    \Output{A network schedule $\mathcal{S}_{\rm net}$ and a computing schedule $\mathcal{S}_{\rm com}$,  if exist}
    \vspace{0.05in}
    Compute the hyper-period $\mathcal H$ and construct the set $I$ of instances of  $\mathcal{T}$\;
    \While{True}{
        
        $\mathcal{S}_{{\rm com}}$ = \textbf{ComputingScheduling}$(I)$\;
        \If{$\mathcal{S}_{{\rm com}}$ is infeasible}{
            \KwRet  None      
        }
        $\mathcal{S}_{{\rm net}}$ = \textbf{NetworkScheduling}$(I, \mathcal{S}_{{\rm com}})$\;
        \uIf{$\mathcal{S}_{{\rm net}}$ is infeasible}{
        %\tcc{generate and add new constraints}
            \If{\textbf{ConstraintGenerator}($I$) = None}
            {
                \KwRet  None     
            }
                  
        }
        \Else{
            \textbf{break}
        }
  
    }
    \KwRet $\mathcal{S}_{{\rm net}}$, $\mathcal{S}_{{\rm com}}$\;
    \caption{CRS Algorithm ($1\text{-}m\text{-}1$ Model)}
    \label{alg2}
\end{algorithm}

\subsubsection{Constraint Generator}

we first introduce some definitions and preliminaries to help understand the design of the constraint generator.

\begin{definition}
\label{def:as_EDF}
{\bf Virtual Release Time/Deadline:} Given a set of real-time composite tasks $\mathcal{T} = \{\tau_1, \tau_2,..., \tau_n\}$, for each instance $\tau_{i,j}$ with the release time $r_{i,j}$ and deadline $d_{i,j}$, the virtual release time and virtual deadline of the sensing segment $\tau_{i,j,s}$ is set to be $r_{i,j,s} = r_{i,j}$ and $d_{i,j,s} = s_{i,j,c}$. The virtual release time and deadline of the actuating segment $\tau_{i,j,a}$ is set to be $r_{i,j,a} = f_{i,j,c}$ and $d_{i,j,a} = d_{i,j}$, where $s_{i,j,c}$ and $f_{i,j,c}$ are the start time and the finish time of the corresponding computing segment $\tau_{i,j,c}$, respectively. Both $s_{i,j,c}$ and $f_{i,j,c}$ are obtained in the computing scheduling sub-problem.
\end{definition}

We say a network segment is \textit{virtually included} in a time interval if its virtual release time and virtual deadline are both included in that interval. That is, a network segment $\tau_{i,j,{\rm net}}$ is virtually included in $[t_0, t_1]$ if $r_{i,j,{\rm net}}\geq t_0$ and $d_{i,j,{\rm net}}\leq t_1$. Based on this condition, we define a \textit{virtual network demand} over a given interval to be the sum of the execution time of all the network segments that are virtually included in that interval. We define the \textit{virtual network overload interval} and \textit{minimal virtual network overload interval} as follows.

\begin{definition}
\label{def:overload}
{\bf Virtual Network Overload Interval (VNOI):} Given a set of network segments and a time interval $[t_0, t_1]$, the interval $[t_0, t_1]$ is a VNOI if its virtual network demand is larger than $t_1 - t_0$.    
\end{definition}

\begin{definition}
\label{def:Minium}
{\bf Min. Virtual Network Overload Interval:} Given a VNOI $[t_0, t_1]$, if there does not exist another VNOI $[t_2, t_3]$ satisfying $[t_2, t_3]\subset[t_0, t_1]$, $[t_0, t_1]$ is a minimal VNOI.
\end{definition}

\begin{figure}
  \centering
  \includegraphics[width=0.9\columnwidth]{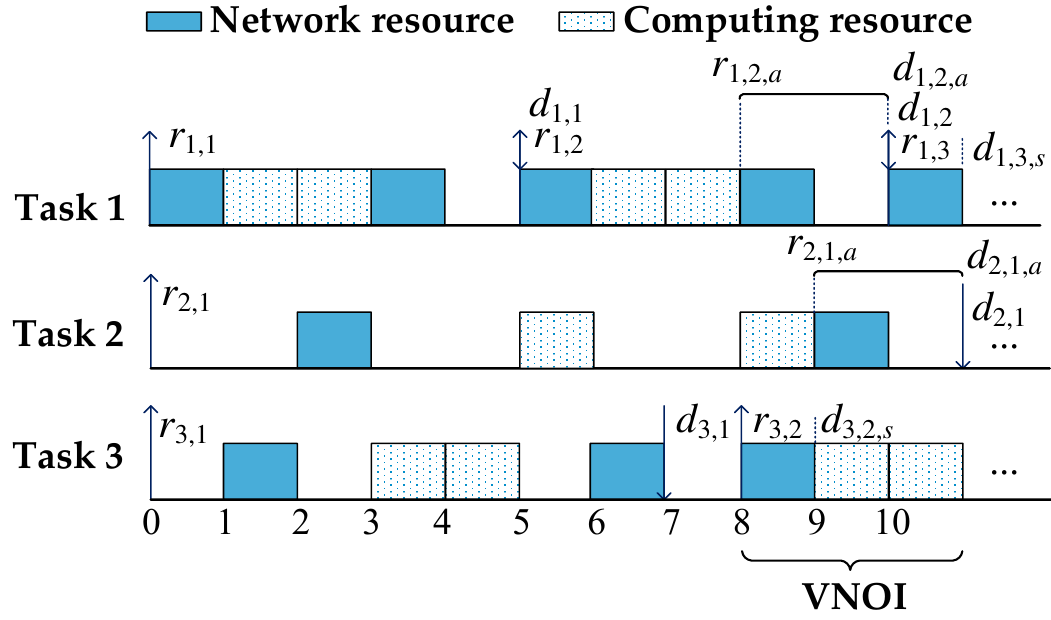}
  \caption{An instance of VNOI}
  \label{fig:eg2}
\end{figure}

\begin{example}
Consider three tasks $\tau_1 = (5, 4, 1, 2, 1)$, $\tau_2 = (20, 11, 1, 2, 1)$, and $\tau_3 = (8, 7, 1, 2, 1)$. Suppose EDF has scheduled the computing segments. As shown in Fig.~\ref{fig:eg2}, based on the virtual timing parameters, the network segments $\tau_{1,2,a}$, $\tau_{2,1,a}$, $\tau_{1,3,s}$, and $\tau_{3,2,s}$ satisfy that $r_{1,2,a} = 8$,  $d_{1,2,a} = 10$, $r_{2,1,a} = 9$, $d_{2,1,a} = 11$, $r_{1,3,s} = 10$,   $d_{1,3,s} = 11$, $r_{3,2,s} = 8$, and  $d_{3,2,s} = 9$. Thus, the virtual network demand over $[8, 11]$ is $4$ and it is an VNOI.   
\end{example}

For a VNOI $\alpha=[t_0, t_1]$, we use $U_\alpha$ to denote the set of network segments which are virtually included in $\alpha$. We use $A_{\alpha} \subset U_\alpha$ to denote the set of actuating segments in $U_\alpha$ whose corresponding sensing segments are not virtually included in $\alpha$. That is, for any $\tau_{i,a} \in A_{\alpha}$, $\tau_{i,a} \in U_{\alpha}$ and $\tau_{i,s} \notin U_{\alpha}$. We use $S_{\alpha} \subset U_{\alpha}$ to denote the set of sensing segments in $U_\alpha$ whose corresponding actuating segments are not virtually included in $\alpha$. Finally, we use $M_{\alpha} \subset U_{\alpha}$ to denote the set of networking segments which contain both the sensing and actuating segments from the same tasks.

According to the definitions of $U_{\alpha}$, $A_{\alpha}$, $S_{\alpha}$, and $M_{\alpha}$, it holds that  $U_{\alpha} = A_{\alpha} \cup S_{\alpha} \cup M_{\alpha}$. Let $A_{\alpha, c}$, $S_{\alpha, c}$ and $M_{\alpha, c}$ denote the sets of corresponding computing segments of the network segments in $A_{\alpha}$, $S_{\alpha}$ and $M_{\alpha}$, respectively. We define $\phi(A_{\alpha, c})$, $\phi(S_{\alpha, c})$, and $\phi(M_{\alpha, c})$ to be the amount of execution time of the computing segments in $A_{\alpha, c}$, $S_{\alpha, c}$, and $M_{\alpha, c}$ that are scheduled in the interval $[t_0-1, t_1+1]$ in the computing schedule.

It should be noted that there exists no feasible schedule for the network scheduling sub-problem when VNOIs are present. Thus, the constraint generator must eliminate all the VNOIs by adding new constraints to the original CRS problem. In the following, we first present several important observations and then show how to eliminate a VNOI by adjusting the timing parameters of the network segments that are virtually included in that interval. 

\begin{lemma}
\label{lem:1}
The virtual network demand over a VNOI $\alpha =  [t_0, t_1]$ is at most $t_1 - t_0+2$ under the $1\text{-}m\text{-}1$ model. 
\end{lemma}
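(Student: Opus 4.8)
The plan is to translate the network demand over $\alpha=[t_0,t_1]$ into a counting argument on the \emph{computing} schedule, exploiting the fact that under the $1\text{-}m\text{-}1$ model the virtual release/deadline of every network segment is pinned to the start or finish time of its computing segment. The first step is to record two boundary facts that follow directly from Definition~\ref{def:as_EDF} and the segment-order constraints~\eqref{eq:2}. If a sensing segment $\tau_{i,j,s}$ is virtually included in $\alpha$, then $r_{i,j}\ge t_0$ and $s_{i,j,c}=d_{i,j,s}\le t_1$; since the unit-size sensing segment precedes its computing segment, $s_{i,j,c}\ge r_{i,j}+1\ge t_0+1$, so the computing segment \emph{starts} in $[t_0+1,t_1]$. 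Symmetrically, if an actuating segment $\tau_{i,j,a}$ is virtually included in $\alpha$, then $f_{i,j,c}=r_{i,j,a}\ge t_0$ and the unit-size actuating segment forces $f_{i,j,c}\le d_{i,j}-1\le t_1-1$, so the computing segment \emph{finishes} in $[t_0,t_1-1]$.

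Second, I would charge each virtually included network segment to a single unit-slot of its computing segment: a sensing segment to the first unit of its computing segment, an actuating segment to the last unit. For a virtually included sensing segment with $s_{i,j,c}\le t_1-1$, its first computing unit occupies $[s_{i,j,c},s_{i,j,c}+1]\subseteq[t_0,t_1]$; since start times are distinct on the uniprocessor, at most one such segment has $s_{i,j,c}=t_1$, and that one is discarded. Likewise, for a virtually included actuating segment with $f_{i,j,c}\ge t_0+1$, its last computing unit occupies $[f_{i,j,c}-1,f_{i,j,c}]\subseteq[t_0,t_1]$, and at most one has $f_{i,j,c}=t_0$. The crucial point, and where the assumption $m>1$ enters, is that all charged units are distinct: two distinct computing segments never share a unit-slot on the uniprocessor, and for a task in $M_\alpha$ (both its sensing and actuating virtually included) the first and last charged units differ because $C_{i,c}=m\ge 2$. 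Hence the charged units are pairwise distinct slots inside $[t_0,t_1]$, of which there are only $t_1-t_0$.

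Third, I would collect the count. The sensing segments virtually included in $\alpha$ are exactly $S_\alpha\cup M_\alpha$ and the actuating ones are $A_\alpha\cup M_\alpha$, so the number of distinct charged slots is at least $(|S_\alpha|+|M_\alpha|-1)+(|A_\alpha|+|M_\alpha|-1)$. Bounding this by $t_1-t_0$ and using $|U_\alpha|=|S_\alpha|+|A_\alpha|+|M_\alpha|$ yields $|U_\alpha|+|M_\alpha|\le t_1-t_0+2$, hence the virtual network demand satisfies $|U_\alpha|\le t_1-t_0+2$, as claimed (with $|M_\alpha|$ units of slack).

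The main obstacle I anticipate is not the arithmetic but the bookkeeping in the second step: one must argue carefully that the ``first-unit'' and ``last-unit'' slots are genuinely distinct across $S_\alpha$, $A_\alpha$, and $M_\alpha$ — in particular that a first computing unit of one task can never coincide with a last computing unit of another — and that discarding the two boundary segments (those with $s_{i,j,c}=t_1$ or $f_{i,j,c}=t_0$) accounts for exactly the additive $+2$. I would also verify the degenerate cases where $S_\alpha\cup M_\alpha$ or $A_\alpha\cup M_\alpha$ is empty, so that the ``$-1$'' terms are not applied to an empty set; in those cases the one-sided count already gives a bound no weaker than $t_1-t_0+1$.
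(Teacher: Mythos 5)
Your proof is correct and takes essentially the same route as the paper's: both arguments convert the virtual network demand over $\alpha$ into occupancy of the computing schedule, charging each virtually included sensing segment to the first unit and each virtually included actuating segment to the last unit of its computing segment, and using the uniprocessor capacity together with $m\geq 2$ (so first and last units of a segment in $M_\alpha$ differ) to keep the charges distinct. The only differences are in the boundary accounting — the paper absorbs edge effects by enlarging the interval to $[t_0-1,t_1+1]$ of length $t_1-t_0+2$, whereas you keep $[t_0,t_1]$ and discard at most two boundary segments (the unique ones with $s_{i,j,c}=t_1$ or $f_{i,j,c}=t_0$) — plus a harmless notational slip in your last paragraph, where you count $M_\alpha$ by tasks while reusing the paper's segment-counting identity $|U_\alpha|=|S_\alpha|+|A_\alpha|+|M_\alpha|$, so the claimed ``$|M_\alpha|$ units of slack'' is not justified under the paper's convention, though the stated bound $t_1-t_0+2$ follows correctly under either convention.
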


\begin{proof}

Suppose the interval $\alpha = [t_0, t_1]$ has a total amount of $t_1 - t_0 + x$ units of network segments virtually included in $[t_0, t_1]$, where $x\in \mathbb{Z}_{>0}$. We will prove $x\leq 2$ as follows. 

Based on Definition~\ref{def:as_EDF}, the virtual release times of every two actuating segments are different since the finish times of their corresponding computing segments cannot be identical in the computing schedule. Thus, for any two different actuating segments $\tau_{i,j,a},\tau_{x,y,a}\in U_{\alpha}$, it holds that $|r_{i,j,a}-r_{x,y,a}| \geq 1$. Similarly, the virtual deadlines of every two sensing segments are different since the start times of their corresponding computing segments cannot be identical in the computing schedule. Thus, for any two different sensing segments $\tau_{i,j,s},\tau_{x,y,s} \in U_{\alpha}$,  the condition $|d_{i,j,s} - d_{x,y,s}| \geq  1$ holds.

For any $\tau_{i,j,a} \in A_{\alpha}$, since its corresponding sensing segment is excluded from $U_{\alpha}$, its corresponding computing segment $\tau_{i,j,c} $ has the release time $r_{i,j,c}\leq t_0$ and deadline $d_{i,j,c}\leq t_1-1$.  The last unit of $\tau_{i,j,c}$ is scheduled in $[r_{i,j,a}-1, r_{i,j,a}]$ in the computing schedule. Due to that $r_{i,j,a}\in [t_0, t_1-1]$, the last unit of $\tau_{i,j,c}$ can be scheduled in $[t_0-1, t_1-1]$.  Taking into account all the computing segments in $A_{\alpha, c}$, it holds that $|A_{\alpha}|\leq \phi(A_{\alpha,c})$. Similarly, for any $\tau_{x,y,s} \in S_{\alpha}$, since its corresponding actuating segment is excluded from $U_{\alpha}$, its corresponding computing $\tau_{x,y,c}$ has release time $r_{x,y,c}> t_0$ and deadline $d_{x,y,c}\geq t_1+1$. The first unit of $\tau_{x,y,c}$ is scheduled in $[d_{x,y,s}-1,d_{x,y,s}]$ in the computing schedule. Due to that $d_{x,y,s} \in [t_0+1, t_1]$, at least the first unit of $\tau_{i,j,c}$ is scheduled in $[t_0+1, t_1+1]$. Considering all the computing segments in $S_{\alpha}$, it holds that $|S_{\alpha}|\leq \phi(S_{\alpha,c})$. For any actuating segment $\tau_{u,v,a}$ in $M_{\alpha}$, since its corresponding sensing segment and itself are both in the interval $[t_0, t_1]$, it holds that  $r_{u,v,c}\geq t_0+1$ and $d_{u,v,c}\leq t_1-1$. So the computing segment $\tau_{u,v,c}$ is scheduled within $[t_0+1, t_1-1]$. Since $\tau_{u,v,c}$ has at least two units of execution time, it holds that  $|M_{\alpha}|\leq \phi(M_{\alpha,c})$. Based on the above constraints on the timing parameters of the computing segments, it holds that
\begin{equation} \label{eq:6}
|S_{\alpha}|+|A_{\alpha}|+|M_{\alpha}|\leq \phi(S_{\alpha,c})+\phi(A_{\alpha,c})+\phi(M_{\alpha,c})
\end{equation}

As the amount of the execution time of the computing segments scheduled in the interval $[t_0-1, t_1+1]$  cannot exceed the length $t_1-t_0+2$ of this interval, there is 
\begin{equation} \label{eq:7}
\phi(S_{\alpha,c})+\phi(A_{\alpha,c})+\phi(M_{\alpha,c}) \leq t_1-t_0+2
\end{equation}

Combining \eqref{eq:6} and \eqref{eq:7}, it yields that
\begin{equation} \label{eq:8}
|S_{\alpha}|+|A_{\alpha}|+|M_{\alpha}|\leq t_1-t_0+2
\end{equation}

Therefore, we conclude that $x\leq 2$ and the virtual network demand over an interval $[t_0, t_1]$ is at most $t_1 - t_0+2$ if the EDF schedule of the computing segments is feasible. 
\end{proof}

Lemma \ref{lem:1} gives an upper bound on the virtual network demand over a VNOI. The number of overflow network segments in the interval is thus restricted to 2. However, selecting two feasible overflow network segments from a VNOI is still challenging due to its combinatorial nature. The following lemma presents an important property of the minimal VNOI (see Definition~\ref{def:Minium}), which can further speed up the overflow segment selection from VNOIs.

\begin{lemma}
\label{lem:2}
The virtual network demand over a minimal VNOI $\alpha =  [t_0, t_1]$ is at most $t_1 - t_0+1$ under the $1\text{-}m\text{-}1$ model. 
\end{lemma}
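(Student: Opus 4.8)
The plan is to not argue about minimality at all at first, but to sharpen the counting already carried out for Lemma~\ref{lem:1}: I will show that the upper bound $t_1-t_0+2$ can never actually be met, so that the demand over any VNOI $\alpha=[t_0,t_1]$ is at most $t_1-t_0+1$. Since a VNOI has integer demand strictly greater than $t_1-t_0$, its demand is at least $t_1-t_0+1$, and ruling out the value $t_1-t_0+2$ therefore settles the lemma (indeed for every VNOI, not only minimal ones; minimality is used elsewhere only to make $\alpha$ the innermost such interval).

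First I would suppose, for contradiction, that the virtual network demand over $\alpha$ equals $t_1-t_0+2$ and revisit the inequality chain $|S_\alpha|+|A_\alpha|+|M_\alpha|\le \phi(S_{\alpha,c})+\phi(A_{\alpha,c})+\phi(M_{\alpha,c})\le t_1-t_0+2$ from Lemma~\ref{lem:1}. Because the demand equals $|U_\alpha|=|S_\alpha|+|A_\alpha|+|M_\alpha|$, both inequalities must be tight. Tightness of the left inequality, together with the three individually valid bounds $|A_\alpha|\le\phi(A_{\alpha,c})$, $|S_\alpha|\le\phi(S_{\alpha,c})$, $|M_\alpha|\le\phi(M_{\alpha,c})$, forces each of them to be an equality separately. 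Tightness of the right inequality forces the computing of $A_\alpha\cup S_\alpha\cup M_\alpha$ to occupy every one of the $t_1-t_0+2$ unit slots of $[t_0-1,t_1+1]$, leaving no gap and admitting no computing of a task outside $U_\alpha$ into this window.

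Next I would translate these equalities into rigid positions. From $\phi(A_{\alpha,c})=|A_\alpha|$, every $\tau_{i,j,a}\in A_\alpha$ contributes exactly one computing unit to the window; since its last computing unit lies in $[r_{i,j,a}-1,r_{i,j,a}]$ and $r_{i,j,a}\ge t_0$, a single unit is possible only when $r_{i,j,a}=t_0$, and as these virtual release times are pairwise distinct we get $|A_\alpha|\le 1$, the computing of the unique actuating segment (if any) ending exactly at $t_0$. Symmetrically, $\phi(S_{\alpha,c})=|S_\alpha|$ gives $|S_\alpha|\le 1$ with the unique sensing computing (if any) starting exactly at $t_1$. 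Full occupancy then forces the slot $[t_0-1,t_0]$ to be covered, which only an $A_\alpha$-computing can do, and $[t_1,t_1+1]$ to be covered, which only an $S_\alpha$-computing can do; hence in fact $|A_\alpha|=|S_\alpha|=1$.

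The contradiction is then a length count on the interior. The equalities confine the entire in-window computing of $A_\alpha$ and $S_\alpha$ to those two boundary units, so the $t_1-t_0$ interior slots forming $[t_0,t_1]$ must be filled solely by $M_\alpha$-computing; but each $M_\alpha$ task has both its sensing and actuating inside $\alpha$, which pins its computing to $[t_0+1,t_1-1]$ (as shown in the proof of Lemma~\ref{lem:1}), offering only $t_1-t_0-2$ slots. Thus at least two interior slots, e.g.\ $[t_0,t_0+1]$ and $[t_1-1,t_1]$, are left uncovered, contradicting full occupancy. Hence the demand cannot reach $t_1-t_0+2$. The step I expect to be the most delicate is the passage from the single tight summed inequality to the three separate equalities and then to the one-unit-per-segment positional statements: this is exactly where the $1$-$m$-$1$ structure (unit-length sensing and actuating, computing length at least two) is essential, and the argument must be phrased to also dispatch the degenerate subcases $|A_\alpha|=0$ or $|S_\alpha|=0$, in which a boundary slot is immediately left uncovered and the same contradiction appears.
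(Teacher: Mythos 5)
Your proof has a genuine gap, and it occurs exactly at the step you flagged as delicate: the passage from ``each segment in $A_\alpha$ contributes exactly one computing unit to the window $[t_0-1,t_1+1]$'' to ``$r_{i,j,a}=t_0$, hence $|A_\alpha|\le 1$.'' This inference implicitly assumes that the units of a computing segment are scheduled contiguously, so that the second-to-last unit sits in $[r_{i,j,a}-2,r_{i,j,a}-1]$ and would also fall inside the window whenever $r_{i,j,a}>t_0$. But the CPU in this model is a \emph{preemptive} uniprocessor: a computing segment in $A_{\alpha,c}$ can execute all but its last unit before $t_0-1$, be preempted across the window boundary, and place its single last unit strictly inside the window at $[r_{i,j,a}-1,r_{i,j,a}]$ with $r_{i,j,a}>t_0$. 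The same failure hits the symmetric claim $|S_\alpha|\le 1$. Notably, the paper's proof of Lemma~\ref{lem:2} derives the exact opposite under the demand-$(t_1-t_0+2)$ hypothesis, namely $|A_\alpha|\ge 2$ and $|S_\alpha|\ge 2$, and then uses the chain of mutually preempting computing segments to exhibit a \emph{proper sub-interval} $[r_{u,v,a},d_{u,v,a}]\subset[t_0,t_1]$ that is itself a VNOI, contradicting minimality --- not a contradiction in the absolute sense you are after.

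Indeed, your strengthened claim (no VNOI at all, minimal or not, has demand $t_1-t_0+2$) is false, so no repair of this route can work; minimality is essential. A concrete counterexample with seven tasks, all with $C_{i,s}=C_{i,a}=1$ and $C_{i,c}=2$, given as (release, absolute deadline) pairs: $A_1=(1,6)$, $A_2=(0,7)$, $M_1=(4,9)$, $M_2=(6,11)$, $M_3=(8,12)$, $S_2=(4,17)$, $S_1=(11,16)$. EDF on the computing sub-problem yields: $A_2$ runs $[1,2]$, is preempted by $A_1$ which runs $[2,4]$, $A_2$ resumes $[4,5]$, then $M_1$ runs $[5,7]$, $M_2$ runs $[7,9]$, $M_3$ runs $[9,11]$, $S_2$ runs $[11,12]$, is preempted by $S_1$ which runs $[12,14]$, and $S_2$ resumes $[14,15]$. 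The resulting virtual windows (Definition~\ref{def:as_EDF}) are: $A_1$ actuating $[4,6]$, $A_2$ actuating $[5,7]$, $M_1$ sensing $[4,5]$ and actuating $[7,9]$, $M_2$ sensing $[6,7]$ and actuating $[9,11]$, $M_3$ sensing $[8,9]$ and actuating $[11,12]$, $S_2$ sensing $[4,11]$, $S_1$ sensing $[11,12]$ --- ten unit-length network segments virtually included in $[t_0,t_1]=[4,12]$, i.e.\ demand $10=(t_1-t_0)+2$, with both of Lemma~\ref{lem:1}'s inequalities tight. Here $A_2$ contributes exactly one computing unit ($[4,5]$) to the window $[3,13]$ yet has $r_{A_2,a}=5>t_0$, which is precisely the configuration your argument rules out. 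This interval is of course not minimal (e.g.\ $[4,7]$ already has demand $4$), which is why the paper's lemma survives while the strengthened version does not.
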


\begin{proof}

According to Lemma \ref{lem:1}, we have either 
\begin{equation}\label{eq:9}
|A_{\alpha}|+|M_{\alpha}|+|S_{\alpha}|= t_1-t_0+1
\end{equation}
or
\begin{equation}\label{eq:10}
|A_{\alpha}|+|M_{\alpha}|+|S_{\alpha}|= t_1-t_0+2
\end{equation}
for the virtual network demand over the minimal VNOI $[t_0, t_1]$.  

We first assume that the virtual network demand is $t_1 - t_0+2$ and prove that this leads to a contradiction and thus the demand can only be $t_1 - t_0+1$.  According to \eqref{eq:6} and \eqref{eq:7}, one can derive that
\begin{equation} \label{eq:11}
\phi(A_{\alpha,c})+\phi(M_{\alpha,c})+\phi(S_{\alpha,c}) = t_1-t_0+2
\end{equation}

This indicates that the interval $[t_0-1, t_1+1]$ is fully utilized by the computing segments in $A_{\alpha,c}$, $M_{\alpha,c}$ and $S_{\alpha,c}$. Also, because of \eqref{eq:10}, \eqref{eq:11}, and the following inequalities
\begin{equation}\label{eq:12}
\begin{split}
|A_{\alpha}| \leq \phi(A_{\alpha,c}); |S_{\alpha}| \leq \phi(S_{\alpha,c}); |M_{\alpha}| \leq \phi(M_{\alpha,c})
\end{split}
\end{equation}
one can obtain that 
\begin{equation}\label{eq:13}
\begin{split}
|A_{\alpha}| = \phi(A_{\alpha,c});
|S_{\alpha}| = \phi(S_{\alpha,c});
|M_{\alpha}| = \phi(M_{\alpha,c})
\end{split}
\end{equation}
where $|A_{\alpha}|\geq 2$, $|S_{\alpha}|\geq 2$, and $|M_{\alpha}|\geq 0$ hold based on the following discussion. Suppose that $|A_{\alpha}|=1$, it then holds that $\phi(A_{\alpha,c})=1$. As the computing segments in $S_{\alpha,c}\cup M_{\alpha,c}$ can only be scheduled in $[t_0+1, t_1+1]$, it holds that $\phi(S_{\alpha,c})+\phi(M_{\alpha,c})\leq t_1-t_0$ and thus $|S_{\alpha}|+|M_{\alpha}|\leq t_1-t_0$ according to \eqref{eq:13}. As $|A_{\alpha}|=1$, the number of network segments $|A_{\alpha}|+|M_{\alpha}|+|S_{\alpha}|$ is at most $t_1-t_0+1$, which contradicts our assumption \eqref{eq:10}. Similarly, $|S_{\alpha}|\geq 2$ holds.

Let $\tau_{u,v,a}\in A_{\alpha} \cup M_{\alpha}$ be the actuating segment that has the latest release time among all the actuating segments, where it holds that $r_{u,v,a}\geq t_0$. Since $|A_{\alpha}|\geq 2$, it meets that $r_{u,v,a}>t_0$. Let $S_{\alpha,c,1} \subset S_{\alpha,c}$ be the set of computing segments scheduled before $r_{u,v,a}$ and $S_{\alpha,c,2}\subset S_{\alpha,c}$ be the set of computing segments scheduled after $r_{u,v,a}$.  The computing segments in $A_{\alpha,c} \cup M_{\alpha,c,} \cup S_{\alpha,c,1}$ are finished before $r_{u,v,a}$ while the computing segments in $S_{\alpha,c,2}$ fully utilize the interval $[r_{u,v,a}, t_1+1]$ based on the condition \eqref{eq:10}. 

With the condition $|S_{\alpha}|=\phi(S_{\alpha,c})$, for any $\tau_{i,j,c} \in S_{\alpha,c}$, it only has its first unit scheduled in the interval $[t_0+1, t_1+1]$ in the computing schedule. We define that $S_{\alpha,c,2}= \{\tau_{g+1,z,c},\tau_{g+2,z,c},...,\tau_{g+|S_{\alpha,c,2}|,z,c}\}$. For any $j \in [1, |S_{\alpha,c,2}|]$, we assume that $\tau_{g+j,z,c}$ has its first unit scheduled in the interval $[r_{u,v,a}+j-1, r_{u,v,a}+j]$. Since $\tau_{g+j,z,c}$ takes at least two units of execution time, $\tau_{g+j+1,z,c}$ must preempt $\tau_{g+j,z,c}$ at time $r_{u,v,a}+j$ so that $\tau_{g+j+1,c}$ could have its first unit scheduled in the interval $[r_{u,v,a}+j, r_{u,v,a}+j+1]$. Therefore, it holds that $r_{g+j+1,z,c}=r_{u,v,a}+j$ and $d_{g+j+1,z,c} < d_{g+j,z,c}$ for any $j\in [1,|S_{\alpha,c,2}|-1]$ if $|S_{\alpha,c,2}|\geq 2$. The corresponding sensing segment $\tau_{g+j+1,z,s}$ satisfies that $r_{g+j+1,z,s} = r_{u,v,a}+j-1$ and $d_{g+j+1,z,s} = r_{u,v,s}+j$. Since the computing segments in $S_{\alpha,c,2}$ scheduled in $[r_{u,v,a}+1, t_1+1]$ are preempted by one another, there are $t_1-r_{u,v,a}$ sensing segments virtually included in $[r_{u,v,a}, t_1]$. Similarly, there are $d_{u,v,a}-r_{u,v,a}$ sensing segments virtually included in the interval $[r_{u,v,a}, d_{u,v,a}]$. With $\tau_{u,v,a}$ considered, the total number of network segments virtually included in the interval $[r_{u,v,a}, d_{u,v,a}]$ is $d_{u,v,a}-r_{u,v,a}+1$. This indicates that the interval $[r_{u,v,a}, d_{u,v,a}]\subset [t_0, t_1]$ is also an VNOI, which contradicts that the interval $[t_0, t_1]$ is already a minimal one.  Therefore, the virtual network demand over the minimal VNOI $[t_0, t_1]$ can only be $t_1 - t_0+1$, i.e., $|A_{\alpha}|+|M_{\alpha}|+|S_{\alpha}|= t_1-t_0+1$.
\end{proof}

Lemma~\ref{lem:2} indicates that only one network segment needs to be moved out of a minimal VNOI. Based on this observation, we aim to select a network segment as the overflow segment and modify its timing parameters to prevent it from being scheduled in the interval. This procedure is  defined as the elimination procedure. We first identify the {\em precondition} for a network segment to be selected as an overflow segment. That is, after the modification of the timing parameter(s) of its corresponding real-time composite task, the schedulability of the network scheduling sub-problem and computing scheduling sub-problem can still be preserved. Based on this precondition, we design the constraint generator based on a backtracking strategy to eliminate all the VNOIs.

The elimination procedure for a minimal VNOI $\alpha = [t_0, t_1]$ is achieved based on a selected candidate segment $\tau_{i,j,{\rm net}}$ and the instance set $I$. Consider that we select an actuating segment $\tau_{i,j,a} \in A_{\alpha}$. Let $D$ be the set of timing parameters which include the deadlines of the actuating segments in $A_{\alpha}$ and $t_0$. After sorting $D$ in the descending order, we can find $d_0\in D$, the latest element smaller than $\thickbar{d}_{i,j,a}$. We modify deadline $\thickbar{d}_{i,j,a}$ to $d_{0}$ and $\thickbar{d}_{i,j,c}$ to $d_0-1$. This assures that the priority of $\tau_{i,j,c}$ improves, thus guaranteeing that their corresponding actuating segments will not be scheduled in $\alpha$ to make the interval become virtually overload again. After rescheduling the computing segments, the minimal VNOI $\alpha$ is eliminated.   

On the other hand, if we select a sensing segment $\tau_{i,j,s}\in S_{\alpha}$, we obtain the set $R$ of timing parameters including the release times of the sensing segments in $S_{\alpha}$ and $t_1$, and sort them in the ascending order. Let $r_0\in R$ be the first element larger than $\thickbar{r}_{i,j,s}$, and we modify $\thickbar{r}_{i,j,s}$ to $r_{0}$. The minimal VNOI can be eliminated by rescheduling the computing segments and updating the virtual timing parameters.

Note that although any network segment virtually included in the interval $[t_0, t_1]$ can be taken as a candidate for the overflow segment, a candidate is infeasible if modifying its timing parameter hurts the schedulability of the real-time composite task set. Therefore, we design the constraint generator to eliminate the VNOIs based on a backtracking algorithm.

 \begin{algorithm}[!t]
 \small
 \DontPrintSemicolon
    \SetKwInOut{Input}{Input}
    \SetKwInOut{Output}{Output}
    \Input{An instance set $I_0$ of real-time tasks }
    \Output{A modified instance set $I_2$ or the initial instance set $I_0$}
    \vspace{0.05in}
    
    \If{the iterative two-stage decomposition method goes to network scheduling}
    {
    Construct the set $O$ of minimal VNOIs\;
    \If{$O = \emptyset$}
    {
          \KwRet $I_0$\;
    }
    
    Get the earliest minimal VNOI $\alpha \in O$\;

    \For {$\tau_{i,{\rm net}} \in A_{\alpha}\cup S_{\alpha} $}
    {
        $I_1$ = Eliminate$(I_0, \alpha, \tau_{i,{\rm net}})$\;
        $I_2 = \textbf{\textup{ConstraintGenerator}}(I_1)$\;
        \If{$I_2 \neq$ None}
        {
            \KwRet $I_2$\;
        }

    }
    }
    \KwRet None \tcp{Algorithm reports a failed case}
    \caption{Constraint Generator ($1\text{-}m\text{-}1$ Model)}
    \label{alg3}
\end{algorithm}

Alg.~\ref{alg3} shows an overview of the constraint generator. The key idea is to utilize the backtracking search to eliminate all the VNOIs through the constraint generator, where the input is the initial set of instances and the output is either a feasible instance set or reported failure. The constraint generator works as a recursion tree. The root corresponds to the original problem of finding a feasible instance set from the given instance set. Each node in this tree corresponds to a recursive sub-problem. In particular, when leaves cannot be further extended, it is either because the network scheduling sub-problem based on the current instance set finds infeasible schedules, or because the feasible instance set is found and returned. 
When the constraint generator is called, we construct the set $O$ of minimal VNOIs based on the current instance set. If $O$ is not empty, then starting with the earliest minimal VNOI in $O$, we branch the search based on its overflow candidate segments. For each candidate segment in the minimal VNOI, we call its elimination procedure to eliminate the overload interval. If every candidate segment fails to utilize the constraint generator to find an instance set which generates no VNOIs in the two-stage decomposition method, it indicates that the task set is not schedulable, which returns a None value and reports failure. The time complexity of the backtracking algorithm is $O(W^pN^2)$, where $W$ is the average number of candidate segments in the minimal overload interval and $p$ is the number of the overload intervals. When $W$ approaches $N$, $p$ goes to $1$. On the contrary, when $W$ approaches $1$, $p$ goes to $N$, where $N=\sum_{i=1}^{n} \mathcal{H}/T_i$ is the total amount of instances of all the tasks. Each node takes $O(N^2)$ time to identify the intervals.

\begin{theorem}
If a feasible schedule exists for the real-time composite task set $\mathcal{T}$ under $1\text{-}m\text{-}1$ model, Alg.~\ref{alg2} can find it.
\end{theorem}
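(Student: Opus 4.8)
The plan is to establish two properties of Alg.~\ref{alg2}: \emph{soundness} (any pair it returns is a feasible composite schedule) and \emph{completeness} (if a feasible composite schedule exists it returns one rather than \textup{None}); the theorem is the completeness direction, but I would prove both together since soundness is what certifies the returned $(\mathcal{S}_{{\rm net}}, \mathcal{S}_{{\rm com}})$. Soundness is the short part. Every modification performed by the elimination procedure inside Alg.~\ref{alg3} only \emph{tightens} a timing parameter: an actuating deadline is lowered from $\thickbar{d}_{i,j,a}$ to some $d_0 < \thickbar{d}_{i,j,a}$ (and the matching computing deadline to $d_0-1$), or a sensing release is raised to some $r_0 > \thickbar{r}_{i,j,s}$. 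Hence the feasible region only shrinks, so any schedule meeting the recorded effective/virtual constraints a fortiori meets the original release, deadline, and segment-order constraints of Section~\ref{sec:model}; here the coupling of the two stages through Definition~\ref{def:as_EDF} (sensing deadline set to the computing start, actuating release to the computing finish) is what enforces the $s\!\to\!c\!\to\!a$ ordering. Since Alg.~\ref{alg2} breaks out of its loop only when \textbf{NetworkScheduling} succeeds, the returned pair satisfies every constraint.

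For completeness I would carry a feasibility invariant through the backtracking tree of Alg.~\ref{alg3}. Call an instance set $I$ \emph{live} if it still admits a feasible composite schedule $\sigma$ respecting all constraints currently recorded in $I$; the root $I_0$ is live by hypothesis, and liveness of $I$ immediately guarantees that the first-stage computing sub-problem is feasible (so Alg.~\ref{alg2} does not return \textup{None} at the computing stage), by optimality of EDF for single-machine preemptive scheduling with release times and deadlines. The engine of the proof is a \textbf{safe-elimination lemma}: if $I$ is live and $\alpha=[t_0,t_1]$ is the earliest minimal VNOI produced after the two-stage EDF on $I$, then at least one candidate $\tau \in A_\alpha \cup S_\alpha$ makes $\text{Eliminate}(I,\alpha,\tau)$ live. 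Because Alg.~\ref{alg3} branches over \emph{all} candidates in $A_\alpha \cup S_\alpha$ and recurses, I only need the \emph{existence} of one safe candidate at each live node; the search then necessarily descends along a live branch. To prove the lemma I would invoke Lemma~\ref{lem:2}: a minimal VNOI has virtual demand exactly $t_1-t_0+1$, so precisely one network segment is in excess. A segment in $M_\alpha$ cannot be that excess segment, since both its unit sensing and actuating endpoints are virtually inside $\alpha$, forcing its computing segment into $[t_0+1,t_1-1]$ and its two network units into $[t_0,t_1]$ under any schedule consistent with $I$; hence the segment that the witness $\sigma$ must push outside $[t_0,t_1]$ lies in $A_\alpha \cup S_\alpha$. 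Matching that segment to its elimination branch, one argues that the chosen tightening ($d_0$ for an actuating candidate, $r_0$ for a sensing candidate) forbids only placements that $\sigma$ already avoids, so $\sigma$ keeps witnessing liveness of the modified set.

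Termination I would obtain from a monotone potential: every elimination strictly tightens at least one integer parameter and loosens none, so the total slack $\sum(\thickbar{d}-\thickbar{r})$ over all segments—a non-negative integer—strictly decreases along any path. Thus every root-to-leaf path in the recursion tree has bounded length, and since each node has at most $|A_\alpha \cup S_\alpha|\le W$ children, the tree is finite, consistent with the stated $O(W^p N^2)$ bound. Putting the pieces together: along the live branch guaranteed by the safe-elimination lemma, the search eventually reaches a live leaf with an empty set of minimal VNOIs; with no virtual overload interval present, EDF optimality yields a feasible $\mathcal{S}_{{\rm net}}$ for the network sub-problem, Alg.~\ref{alg2} exits with a feasible pair rather than \textup{None}, and soundness certifies it.

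The step I expect to be the main obstacle is the safe-elimination lemma, precisely because a VNOI is defined relative to the \emph{EDF} computing schedule of $I$ (Definition~\ref{def:as_EDF}), whereas the witness $\sigma$ may use an entirely different computing schedule. The delicate point is verifying that the particular value selected by the elimination rule—$d_0$, the latest actuating deadline below $\thickbar{d}_{i,j,a}$, or $r_0$, the first sensing release above $\thickbar{r}_{i,j,s}$—is the unique tightening consistent with $\sigma$ and does not over-constrain the first-stage problem after the computing segments are rescheduled. I would lean on the explicit preemption-chain structure of the computing segments exposed in the proof of Lemma~\ref{lem:2} to locate where $\sigma$ places the excess segment and to confirm that both sub-problems remain feasible after the modification.
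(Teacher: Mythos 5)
Your proposal follows the same basic route as the paper's proof --- completeness via the fact that Alg.~\ref{alg3} branches over \emph{every} candidate in $A_\alpha \cup S_\alpha$ --- but it is considerably more rigorous than what the paper actually writes, so a comparison is worth making. The paper's entire argument is two sentences: since the backtracking enumerates all candidate segments, it ``guarantees to remove all the overload intervals if there exists a feasible elimination process,'' hence a feasible schedule is found if one exists. In other words, the paper silently identifies ``a feasible schedule exists'' with ``a feasible elimination process exists,'' which is precisely the implication you isolate as the safe-elimination lemma and honestly flag as the main obstacle; the paper never proves it, nor does it address termination or the final step that an instance set free of VNOIs yields a successful network EDF pass. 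Your invariant-based structure (liveness at the root, existence of at least one live child at every live node, a strictly decreasing integer potential) is what a complete proof would need, and your sketch of the key lemma is in fact completable along the lines you indicate: by Lemma~\ref{lem:2} the excess in a minimal VNOI is a single unit segment; a task contributing to $M_\alpha$ has both its actual release and actual deadline inside $\alpha$, so its two network units are pinned there in \emph{any} schedule, forcing the witness $\sigma$ to place some segment of $A_\alpha \cup S_\alpha$ outside $\alpha$; and since every candidate value satisfies $d_0 \geq t_0$ (because $t_0 \in D$) and $r_0 \leq t_1$ (because $t_1 \in R$), a segment that $\sigma$ schedules before $t_0$ (resp.\ after $t_1$) automatically meets the tightened deadline (resp.\ tightened release), so $\sigma$ itself remains a witness for the eliminated instance set. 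So the gap you anticipate is real but bridgeable --- and it is equally a gap in the paper's own proof; the difference is that your write-up makes it visible and names the lemma that must be proved, whereas the paper's proof asserts the conclusion without it.
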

\begin{proof}
Since we employ the backtracking strategy to enumerate all the candidates of network segments to eliminate all the VNOIs, it guarantees to remove all the overload intervals if there exists a feasible elimination process. This shows that the proposed scheduling algorithm can find a feasible schedule if there exists one. 
\end{proof}

\section{CRS Problem under General Model}\label{sec:CRS-1m1}

Based on the studies of the two special cases, we now extend the CRS problem to its general case, where every real-time composite task has arbitrary execution time for each segment.

\subsection{Algorithm Overview}

We first extend the concepts of effective network overload/tight intervals (ENTI/ENOI, see Definition~\ref{def:tig}) to \textit{effective computing overload/tight intervals}.

\begin{definition}
\label{def:ecoi}
{\bf Effective computing overload/tight interval (ECOI/ECTI):} Given a set of real-time composite tasks and a time interval $[t_0, t_1]$, $[t_0, t_1]$ is an effective computing overload interval if the effective computing demand over $[t_0, t_1]$ is larger than $t_1-t_0$. $[t_0, t_1]$ is an effective computing tight interval if the effective computing demand over $[t_0, t_1]$ is equal to $t_1-t_0$. 
\end{definition}

For the CRS problem in the general case, we first use EDF to schedule the network and computing segments in parallel based on their effective deadlines. Suppose EDF fails scheduling a segment and generates a partial schedule, we define the \textit{provisional timing parameters} as follows.  

\begin{definition}
\label{def:pEDF}
{\bf Provisional Release Time/Deadline:} For each instance $\tau_{i,j}$ with release time $r_{i,j}$ and deadline $d_{i,j}$, the provisional release time $\hat{r}_{i,j,s}$ and provisional deadline $\hat{d}_{i,j,s}$ of its sensing segment $\tau_{i,j,s}$ are $\hat{r}_{i,j,s} =r_{i,j}$ and  $\hat{d}_{i,j,s} = f_{i,j,s}$, where $f_{i,j,s}$ is the finish time of the sensing segment. The provisional release time $\hat{r}_{i,j,c}$ and provisional deadline $\hat{d}_{i,j,c}$ of its computing segment $\tau_{i,j,c}$ are $\hat{r}_{i,j,c} = f_{i,j,s}$ and $\hat{d}_{i,j,c} = f_{i,j,c}$, where $f_{i,j,c}$ is the finish time of the computing segment. The provisional release time $\hat{r}_{i,j,c}$ and provisional deadline $\hat{d}_{i,j,a}$ of its actuating segment $\tau_{i,j,a}$ are $\hat{r}_{i,j,a} = f_{i,j,c}$ and $\hat{d}_{i,j,a} = d_{i,j}$.
\end{definition}

The provisional timing parameters of each segment are initialized as its effective timing parameters before EDF is applied. We say a network/computing segment is \textit{provisionally included} in a time interval if its provisional release time and provisional deadline are both within that interval. Based on this condition, we define a \textit{provisional network/computing demand} over a given interval to be the sum of the execution time of all the network/computing segments that are provisionally included in that interval. We define the \textit{provisional network/computing overload interval} as follows.

\begin{definition}
\label{def:pnoi}
{\bf Provisional Network/Computing Overload Interval (PNOI/PCOI):} Given a set of network/computing segments and a time interval $[t_0, t_1]$, $[t_0, t_1]$ is a provisional network/computing overload interval if its provisional network/computing demand is larger than $t_1 - t_0$.    
\end{definition}

\begin{figure}
  \centering
  \includegraphics[width=0.88\columnwidth]{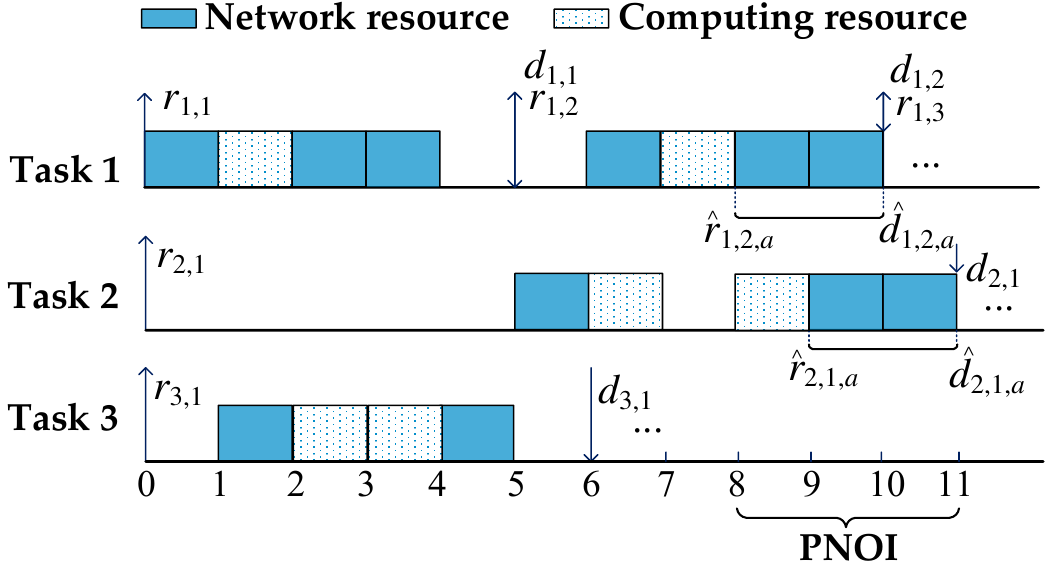}
  \caption{An instance of PNOI}
  \label{fig:eg3}
\end{figure}

\begin{example}
Consider three tasks $\tau_1 = (5, 4, 1, 1, 2)$, $\tau_2 = (20, 11, 1, 2, 2)$, and $\tau_3 = (50, 6, 1, 2, 1)$. Suppose EDF is employed to schedule the network and computing segments and fails at time $11$. As shown in Fig.~\ref{fig:eg3}, based on the provisional timing parameters, the network segments $\tau_{1,2,a}$, $\tau_{2,1,a}$ satisfy that $\hat{r}_{1,2,a} = 8$,  $\hat{d}_{1,2,a} = 10$, $\hat{r}_{2,1,a} = 9$, and $\hat{d}_{2,1,a} = 11$. So the provisional network demand over $[8, 11]$ is $4$, and it is an PNOI.   
\end{example}

For a PNOI $\alpha=[t_0, t_1]$, we use $\mathcal{X}_{\alpha}$ to denote the set of actuating segments provisionally included but not effectively included in $[t_0, t_1]$. Similarly, for a PCOI $\beta=[b_0, b_1]$, we use $\mathcal{C}_\beta$ to denote the set of computing segments that are provisionally included but not effectively included in $[b_0, b_1]$.

Alg.~\ref{alg4} gives an overview of the CRS algorithm under the general model. It is an iterative two-stage method utilizing two groups of intervals to modify the timing parameters of the tasks. \eat{The key idea is to iteratively find the intervals with absolute $100\%$ utilization (Stage 1) and the intervals with the provisional utilization (to be defined later) larger than $100\%$ (Stage 2), and to move out the extra segments appear in those intervals.} 

In stage 1, effective network/computing tight intervals (ENTI/ECTI, see Definition \ref{def:tig} and \ref{def:ecoi}) are used to represent the intervals that have $100\%$ utilization in computing and network resources in any feasible schedule, respectively. A network/computing segment with only one of its effective release time and effective deadline included in an ENTI/ECTI must be moved out of that interval. This adjustment provides better priority for those segments and prevents them from being scheduled in an ENTI/ECTI. After modifying the timing parameters, EDF is used to schedule the network and computing segments in parallel based on their effective deadlines. If EDF fails to find a feasible schedule, we turn to stage 2. 

In stage 2, provisional network/computing overload intervals (PNOI/PCOI, see Definition \ref{def:pnoi}), obtained by EDF scheduling are used to represent the intervals that have the provisional utilization larger than $100\%$. This indicates that we need to move some segments out of a PNOI/PCOI. Among all the candidate segments in a PNOI/PCOI, we propose a greedy strategy to select the actuating/computing segment with the earliest effective release time and move its effective deadline ahead. This will make its corresponding preceding segment(s), e.g., the sensing or/and computing segment, finish earlier, and create more space for scheduling the current segment.  

\eat{
Alg.~\ref{alg4} gives an overview of the CRS algorithm under the general model, which utilizes an iterative two-stage method. The first-stage builds the effective network and computing overload/tight intervals. The effective overload intervals are used to decide the schedulability of the task set. The effective network tight interval triggers the similar modification in as Alg.~\ref{alg1}. The effective computing tight interval (as defined in Definition \ref{def:ecoi}) advances the effective deadline of a computing segment if its deadline is within but the effective release time is outside that interval. The effective computing tight interval defers the effective release time of a computing segment if its release time is within but the effective deadline is outside that interval. The second stage uses EDF to schedule the network and computing segments in parallel based on their effective deadlines. Consider two cases when EDF fails to schedule the tasks. 
In the first case, an earliest provisional computing overload interval (as defined in Definition \ref{def:pnoi}) is located and we testify whether there exists a computing segment provisionally included in the interval that could be used as candidate segment to trigger a feasible modification. If there exists no such a candidate segment, the task set is recognized as unschedulable. Otherwise, we choose the candidate segment with the earliest effective release time. In the second case, we find the earliest provisional network overload interval (as defined in Definition \ref{def:pnoi}) and all the feasible network candidate segments. Similarly, if there exists no such a candidate segment, the task set is unscheduable. Otherwise, we choose the actuating segment with the earliest effective release time as a candidate to lead to a modification. Whenever a modification is achieved, the scheduling problem is re-solved. Failure will be reported if no feasible composite schedule can be constructed.
}

\setlength{\textfloatsep}{10pt}
\begin{algorithm}[!t]
\small
    \DontPrintSemicolon
    \SetKwInOut{Input}{Input}
    \SetKwInOut{Output}{Output}
    \Input{A real-time composite task set $\mathcal{T}=\{\tau_i\}_{i=1}^n$}
    \Output{A network schedule $\mathcal{S}_{\rm net}$ and a computing schedule $\mathcal{S}_{\rm com}$,  if exist}
    \vspace{0.05in}
    %\algsetup{linenosize=\tiny} \small
    Compute the hyper-period $\mathcal H$ and construct the set $I$ of instances of $\mathcal{T}$\;
    \While{True}{
        \tcp{Stage-1}
        \If{an ENOI/ECOI is identified}
        {
            \KwRet  None \tcp{reports a failed case} 
        }
        \If{an identified ENTI/ECTI triggers the modification on timing parameters (Rule 1(a-b) and Rule 2(a-b))}
        {
            \textbf{continue}
        }
        \tcp{Stage-2} 
        Run EDF to schedule the network and computing segments in parallel based on their effective deadlines to construct $\mathcal{S}_{{\rm net}}$ and $\mathcal{S}_{{\rm com}}$. EDF terminates if any segment misses the deadline. The provisional timing parameters are updated\;
       \If{the earliest PCOI $\beta = [b_0, b_1]$ is located}
       {
            
            \If{$\mathcal{C}_\beta$ has no feasible candidate segment}
            {
                \KwRet  None
            }
            Modify the candidate segment in $\mathcal{C}_\beta$ with earliest effective release time (Rule 3(a-b))\;
            \textbf{continue}
       }
        \If{the earliest PNOI $\alpha = [t_0, t_1]$ is located}
       {
            \If { $\mathcal{X}_{\alpha}$ has no feasible candidate segment}
            {
                \KwRet  None
            }

            Modify the actuating segment in $\mathcal{X}_{\alpha}$ with earliest effective release time (Rule 4(a-b))\;
            \textbf{continue}

       }
        \textbf{break}\;

    }
    \KwRet $\mathcal{S}_{{\rm net}}$, $\mathcal{S}_{{\rm com}}$\;
    \caption{CRS Algorithm (General Model)}
    \label{alg4}
\end{algorithm}

\subsection{Design details of the CRS algorithm under general model}

We now present the details of the greedy heuristics. \eat{ based on an iterative two-stage framework summarized in Alg.~\ref{alg4}. In the first stage, we utilize the ENTIs and ECTIs to build the constraints. Then in the second stage, we introduce a greedy modification strategy based on the located PNOIs and PCOIs to modify the effective deadlines of the segments.}

\subsubsection{Stage 1: Modification based on effective timing parameters}

The first stage of the algorithm identifies ECOIs and ENOIs to decide the schedulability of the task set. It then utilizes ENTIs and ECTIs to adjust the tasks according to the following procedure. 

Consider an ENTI $\alpha = [t_0, t_1]$, for any sensing/actuating segment $\tau_{i,j,s}$/$\tau_{i,j,a}$, we introduce Rule 1a and Rule 1b to modify its timing parameters, respectively. 

\begin{itemize}
    \item Rule 1a: if $\tau_{i,j,s}$ is not effectively included in $\alpha$ but has its effective release time $\thickbar{r}_{i,j,s}$ included in $\alpha$, we modify $\thickbar{r}_{i,j,s}$ to $t_1$ and the effective release times of its computing and actuating segments are adjusted to $t_1 + C_{i,s}$ and $t_1 + C_{i,s} + C_{i,c}$, respectively. 
    \item Rule 1b: if $\tau_{i,j,a}$ is not effectively included in $\alpha$ but has its effective deadline $\thickbar{d}_{i,j,a}$ included in $\alpha$, we modify $\thickbar{d}_{i,j,a}$ to $t_0$ and the effective deadlines of its computing and sensing segments are adjusted to $t_0 - C_{i,a}$ and $t_0 - C_{i,a} - C_{i,c}$, respectively. 
\end{itemize}

 Similarly, for any ECTI $\alpha = [t_0, t_1]$, we introduce Rule 2a and Rule 2b to modify the timing parameters of a computing segment $\tau_{i,j,c}$:
 
 \begin{itemize}
     \item Rule 2a: if $\tau_{i,j,c}$ is not effectively included in $\alpha$ but has its effective release time $\thickbar{r}_{i,j,c}$ included in $\alpha$, we modify $\thickbar{r}_{i,j,c}$ to $t_1$ and the release time of its actuating segment is adjusted to $t_1 + C_{i,c}$. 
     \item Rule 2b: if $\tau_{i,j,c}$ is not effectively included in $\alpha$ but has its effective deadline $\thickbar{d}_{i,j,c}$ included in $\alpha$, we modify $\thickbar{d}_{i,j,c}$ to $t_0$ and the effective deadline of its sensing segment is adjusted to $t_0 - C_{i,c}$.
 \end{itemize}

To thoroughly check all the identified ECTIs and ENTIs and the newly generated ones due to the ongoing modifications, whenever a modification is made, we check again on all the ECTIs and ENTIs based on the modified task set until no more modifications is needed. In the meanwhile, we utilize ECOIs and ENOIs as the necessary conditions to decide the feasibility of the task set. If any ECOI or ENOI is found, the task set is unschedulable.

\subsubsection{Stage 2: Modification based on provisional timing parameters}

After stage 1, we use EDF to schedule the segments based on their effective deadlines. If EDF fails to construct a feasible schedule, the algorithm moves to the second stage to deal with PCOIs and PNOIs by employing greedy strategy that always modifies the candidate segment with the earliest effective release time.

Consider the first case that we locate the earliest PCOI $\beta = [b_0, b_1]$, where the provisional demand is $\mathcal{D}_{\beta}$ and the extra provisional demand satisfies $\mathcal{D}_{\beta, extra} = \mathcal{D}_{\beta} - b_1 + b_0$. Let $\mathcal{C}_\beta$ denote the set of computing segments included in $\beta$. We sort them by their effective release times in ascending order. When traversing $\mathcal{C}_\beta$, for a candidate computing segment $\tau_{i,j,c}$, if there exists a set of computing segments $\mathcal{C}_{\beta,1} \subset \mathcal{C}_\beta$  that have their effective deadlines smaller than $\thickbar{d}_{i,j,c}$, we introduce Rule 3a to modify $\tau_{i,j,c}$; if $\mathcal{C}_{\beta,1}$ does not exist, we introduce Rule 3b to modify $\tau_{i,j,c}$. If the modification generates no further ECOIs or ENOIs, we accept this modification and use EDF to schedule the segments again.

\begin{itemize}
    \item Rule 3a: we modify the effective deadline $\thickbar{d}_{i,j,c}$ of $\tau_{i,j,c}$ to the effective deadline $\thickbar{d}_{x,y,c}$ of another computing segment $\tau_{x,y,c}$ in $\mathcal{C}_{\beta,1}$. $\thickbar{d}_{x,y,c}$ is the latest one among all the computing segments in $\mathcal{C}_{\beta,1}$.
    \item Rule 3b: if $\mathcal{D}_{\beta, extra} \geq C_{i,c}$, we modify $\thickbar{d}_{i,j,c}$ to $b_0$. If $\mathcal{D}_{\beta, extra} < C_{i,c}$, we modify $\thickbar{d}_{i,j,c}$ to $b_0 + C_{i,c} - \mathcal{D}_{\beta, extra}$. 
\end{itemize}

We further consider the second case where the earliest PNOI $\alpha = [t_0, t_1]$ is found. Its provisional demand is $\mathcal{D}_{\alpha}$ and the extra provisional demand is $\mathcal{D}_{\alpha, extra} = \mathcal{D}_{\alpha} - t_1 + t_0$. After sorting the set of actuating segments in $\mathcal{X}_\alpha$ by their effective release times in ascending order, we traverse $\mathcal{X}_\alpha$. For a candidate actuating segment $\tau_{i,j,a}$, if there exists a set of actuating segments $\mathcal{X}_{\alpha,1} \subset \mathcal{X}_\alpha$ that have their effective deadlines smaller than $\thickbar{d}_{i,j,a}$, we introduce Rule 4a to modify $\tau_{i,j,a}$; if $\mathcal{X}_{\alpha,1}$ does not exist, we introduce Rule 4b to modify $\tau_{i,j,a}$.

\begin{itemize}
    \item Rule 4a: we modify the effective deadline $\thickbar{d}_{i,j,a}$ to the effective deadline of another actuating segment $\tau_{x,y,a}$ in the set $\mathcal{X}_{\alpha,1}$ which has the latest effective deadline. 
    \item Rule 4b: if $\mathcal{D}_{\alpha, extra} \geq C_{i,a}$, we modify $\thickbar{d}_{i,j,a}$ to $t_0$. If $\mathcal{D}_{\alpha, extra} < C_{i,a}$, we modify $\thickbar{d}_{i,j,a}$ to $t_0 + C_{i,a} - \mathcal{D}_{\alpha, extra}$.  
\end{itemize}

Modifying the effective deadline of the actuating segment will change the effective deadlines of its corresponding computing and sensing segments. If the modification of a candidate actuating segment incurs no ECOIs or EVOIs, the modification is accepted. If there exists no feasible candidate segment, the algorithm returns failure.

The key design principle of the above greedy heuristics is based on the observation that a segment with an earlier effective release time is usually preempted by the segments with later effective release times. The proposed adjustment improves the priority of the current candidate segment. In addition, since the effective release time of its preceding segment(s) is also modified, it makes more space for scheduling the current candidate segment. The segment with the earliest effective release time is preempted by the largest amount of other segments, thus the modification enables the effective adjustment of the priority to the utmost. This modification procedure repeats until EDF finds feasible schedules. Each segment can be modified at most $O(N)$ time and there are at most $O(N)$ number of segments. Since for each round of modification, identifying the overload/tight intervals takes $O(N^2)$ time, the overall time complexity of Alg.~\ref{alg4} is $O(N^4)$, where $N=\sum_{i=1}^{n} \mathcal{H}/T_i$ is the total amount of instances of all the tasks.

\section{Performance Evaluation}\label{sec:evaluation}

In this section, we evaluate the performance of the proposed algorithms for solving the CRS problem under the $h\text{-}1\text{-}1$, $1\text{-}m\text{-}1$, and the general model. The proposed CRS algorithms are compared with the baseline algorithms, including both EDF and least laxity first (LLF). For EDF, we use two ready queues to separately store the network and computing segments. A segment is popped out from the queue if its corresponding task has the earliest deadline. A segment is released and pushed into the ready queue when its preceding segment is finished. Similarly, LLF also uses two ready queues. In each queue, the priority of the segments is defined as the laxity of the tasks, i.e. the task deadline minus the remaining execution time.     
We use NEC to denote the number of task set satisfying the necessary condition of finding a feasible schedule, which is the upper bound of the feasible cases under the general model. We search ECOI and ENOI for each trial of task set. If any ECOI or ENOI is identified, the trial will be recorded as an infeasible case and will not be counted in NEC. 

%The experimental setup is described in Section~\ref{simu:setup} and the evaluation results are presented in Section~\ref{simu:results}. 

\subsection{Experimental Setup} {\label{simu:setup}}

To efficiently obtain the feasible solution of constraint programming, we employ an efficient satisfiability modulo theories (SMT) solver Z3. All the algorithms including SMT are implemented in Python and computed in a CPU cluster node with Xeon E5-2690 v3 2.6 GHz CPU. To perform an extensive comparison, we generate 1000 trials under each parameter setting. Each trial contains a task set $\mathcal{T} = \{\tau_1, \tau_2, ..., \tau_n\}$, where $n \in [1, 50]$. The task periods are randomly set in $[10, 10000]$. For the task set generated under the $h\text{-}1\text{-}1$ model, since its network resource utilization is much larger than the computing resource utilization, we vary the network resource utilization to assess the impact. Similarly, we vary the computing resource utilization of the task set generated under the $1\text{-}m\text{-}1$ model. For the task set under the general model, we use the normalized resource utilization to represent the resource utilization of a task, where the normalized utilization is  calculated as  $(C_{i,s}+C_{i,c}+C_{i,a})/2T_i$ (we use $2T_i$ because there are two resources, CPU and network bandwidth). Given the utilization of a task set, we generate the utilization of all the tasks using the UUniSort algorithm \cite{bini_measuring_2005}. For the general model, we randomly split the total execution time into the sensing, computing, and actuating segments.

%\han{I feel that the task set is of small sizes. Is it possible for us to run larger sizes of task sets?}
\vspace{-0.05in}
\subsection{Evaluation Results} {\label{simu:results}}

%% evaluation
In the first set of experiments, we compare the performance of the proposed solutions with the baseline methods for the general model, $h\text{-}1\text{-}1$ model, and $1\text{-}m\text{-}1$ model by varying the resource utilization of the task set. Fig.~\ref{fig:simu_h_m_k} shows the percentage of feasible cases obtained by CRS, EDF, and LLF under the general model when the normalized resource utilization is varied. One can observe that as the normalized resource utilization increases from $0.2$ to $0.9$, the percentage of feasible cases that NEC, CRS, EDF, and LLF can achieve gradually decrease from $96.6\%$, $96.2\%$, $94.7\%$, and $93.2\%$ to  $1.3\%$, $0.3\%$, $0.1\%$, and $0.1\%$, respectively. Fig.~\ref{fig:simu_h_m_k} also shows that the performance of the proposal heuristic solution is close to the upper bound (the difference increases from 0.4\% to 7\% when the utilization is increased from 0.2 to 0.9), and outperforms those of EDF and LLF significantly. For example, it reports more than $10\%$ and $7\%$ feasible cases than EDF and LLF, respectively, when the utilization is set at 0.6. 

The performance of the algorithms under $h\text{-}1\text{-}1$ model is shown in Fig.~\ref{fig:simu_h_1_1}. It is observed that the percentage of feasible cases that CRS, EDF, and LLF can achieve gradually decrease from $97.3\%$, $96.8\%$, and $91.0\%$ to $10.9\%$, $8.4\%$, and $7.8\%$, respectively. Our method outperforms EDF and LLF when the utilization is high. For example, it reports $4.5\%$ and $12\%$ more feasible cases than EDF and LLF, respectively, when the utilization is set at 0.7.     
We also evaluate the performance of the proposed algorithm by varying the computing resource utilization of the task set under $1\text{-}m\text{-}1$ model. Fig.~\ref{fig:simu_1_m_1} reports the percentage of feasible schedules obtained by CRS, EDF, and LLF, which decrease from $97.5\%$, $97.5\%$, and $89.9\%$ to  $20.0\%$, $18.6\%$, and $15.6\%$, respectively. CRS shows slightly better performance than EDF due to a small number of cases that identify the virtual network overload intervals. Note that we don't show the upper bound (NEC) for $h\text{-}1\text{-}1$ model and $1\text{-}m\text{-}1$ model because the proposed method is the optimal method.

\begin{figure}[t!] %thpb
\centering

    \subfloat
    [General model]
    {\label{fig:simu_h_m_k}
    \includegraphics[width=3.3in]{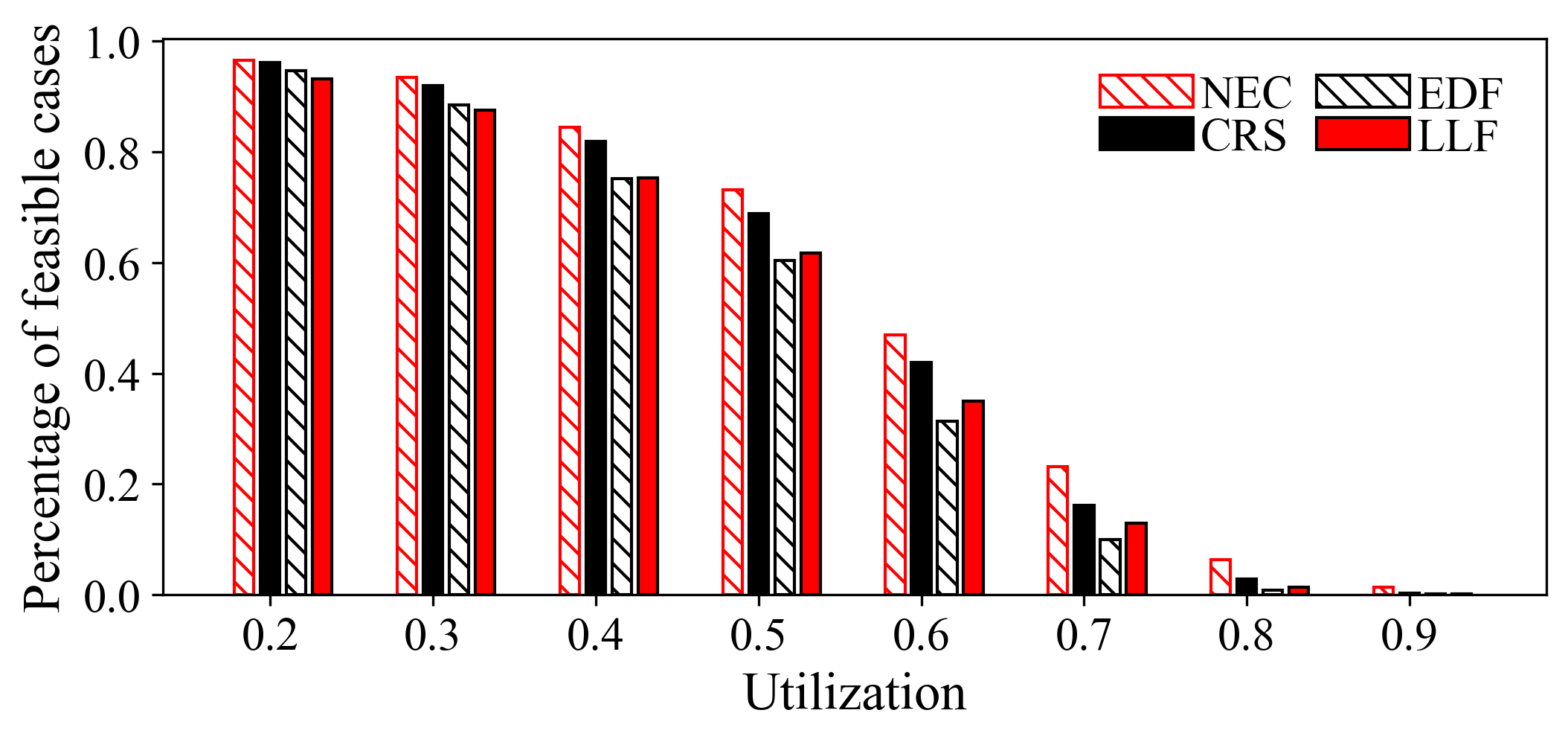}}
     \hspace{0.005\textwidth}
     %\vspace{-0.1in}
    \subfloat
    [$h\text{-}1\text{-}1$ model]
    {\label{fig:simu_h_1_1}
    \includegraphics[width=3.3in]{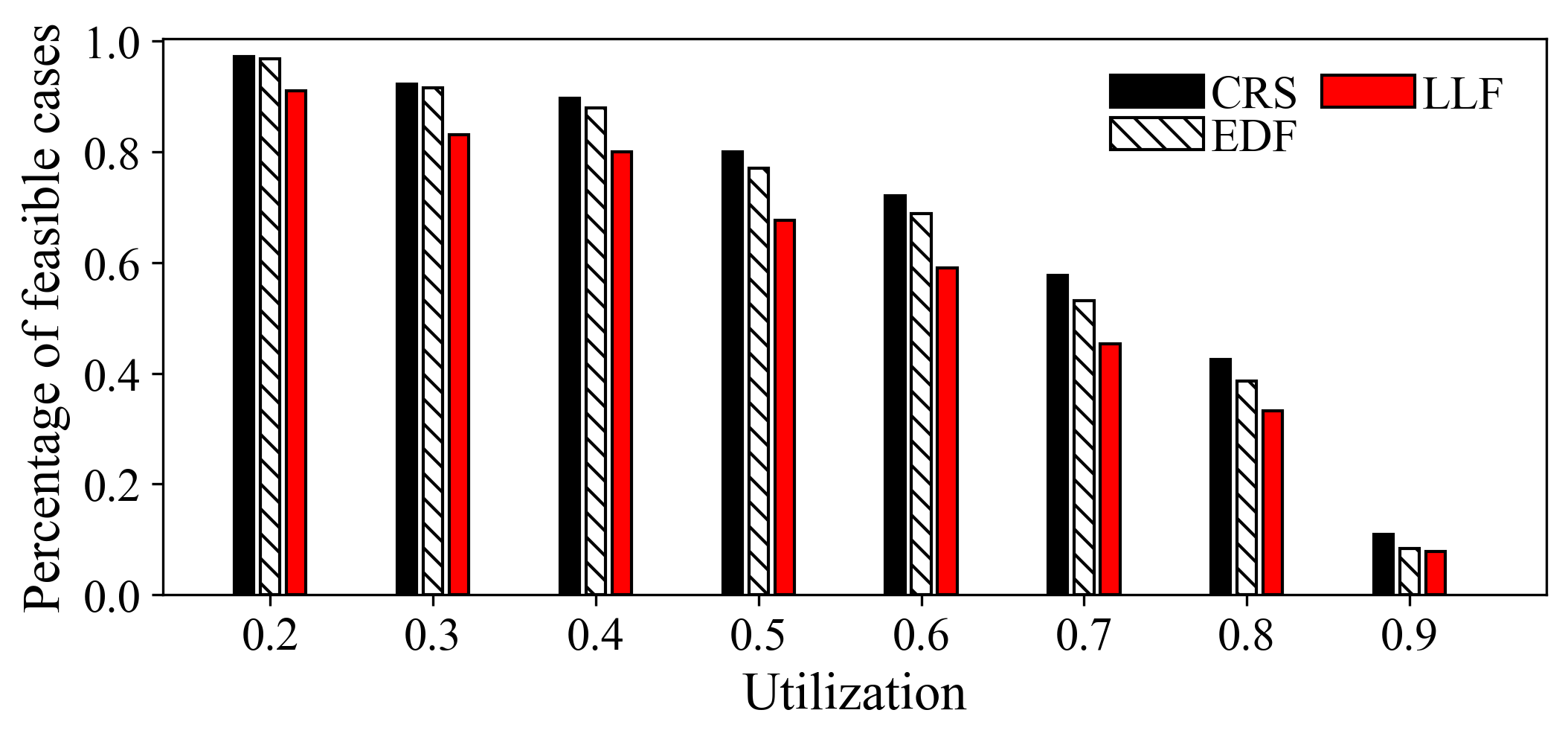}}
     \hspace{0.005\textwidth}
     %\vspace{-0.1in}
    \subfloat
    [$1\text{-}m\text{-}1$ model]
    {\label{fig:simu_1_m_1}
    \includegraphics[width=3.3in]{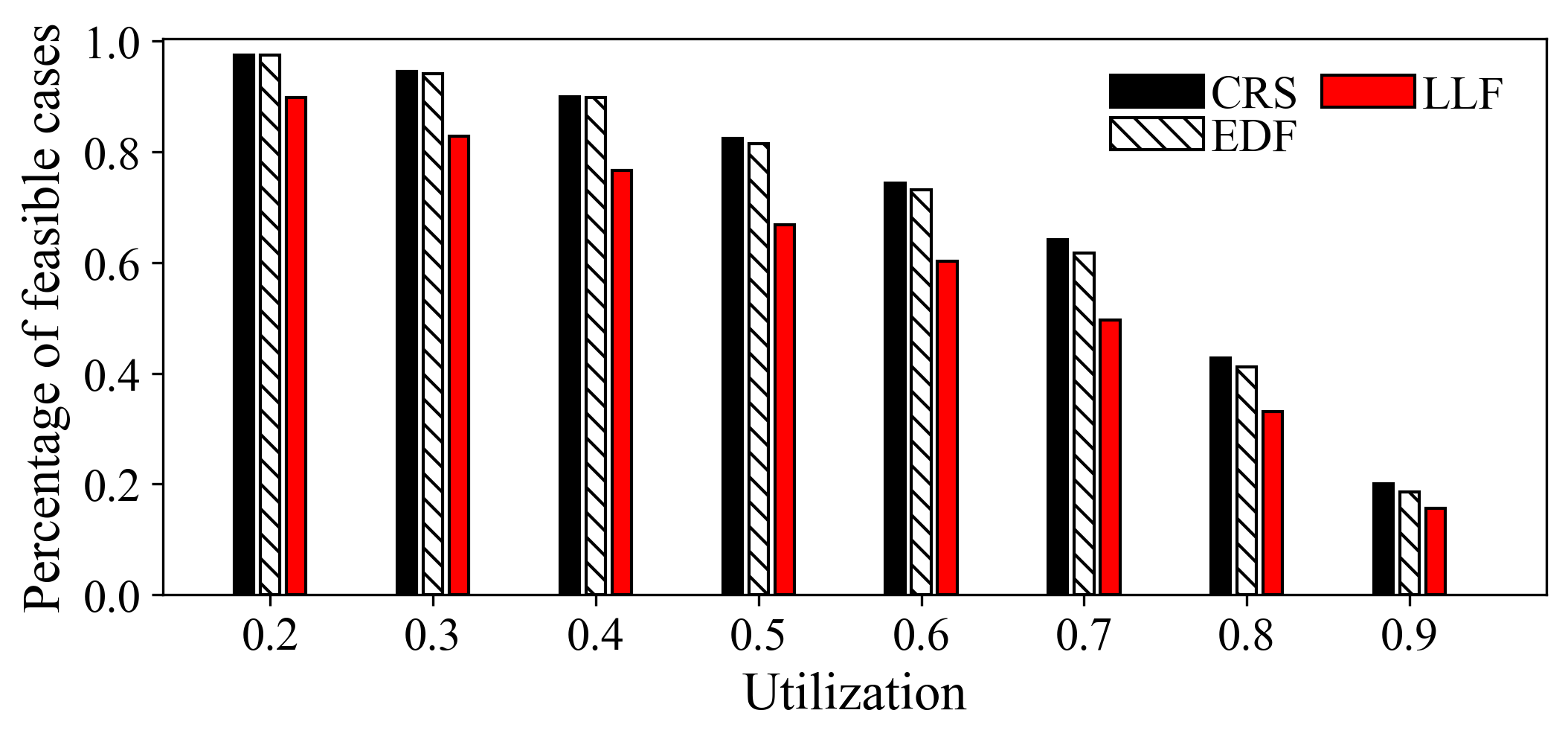}}
     \hspace{0.005\textwidth}
    
\caption{Percentage of feasible cases versus utilization under different models}
\label{fig:simu_models}
\end{figure}

\eat{
\begin{figure}[t!] %thpb
\centering
    \subfloat
    [General model]
    {\label{fig:simu_h_m_k}
    \includegraphics[width=3.3in]{plot_norm_hmk_u_bound.png}}
     \hspace{0.005\textwidth}
     %\vspace{-0.1in}
    \subfloat
    [$h\text{-}1\text{-}1$ model]
    {\label{fig:simu_h_1_1}
    \includegraphics[width=3.3in]{plot_norm_m11_u_bound.png}}
     \hspace{0.005\textwidth}
     %\vspace{-0.1in}
    \subfloat
    [$1\text{-}m\text{-}1$ model]
    {\label{fig:simu_1_m_1}
    \includegraphics[width=3.3in]{plot_norm_1m1_u_bound.png}}
     \hspace{0.005\textwidth}
    
\caption{Percentage of feasible cases versus utilization under different models}
\label{fig:simu_models}
\end{figure}
}

%% SMT time
In the second set of experiments, we further evaluate the performance of the SMT method for the general model. The results are shown in Table~\ref{tab:comp_SMT}. In the experiments, we generate 1000 cases for the evaluation. Since SMT may take a long time to obtain a solution in some corner cases, we set the longest task period to 1000s in each case and the average number of jobs is 100 for the 1000 cases. We also set a time limit of $2$ hours for running SMT. When the time limit is reached, SMT terminates and that trial will be recorded as an unsolved case. From Table~\ref{tab:comp_SMT}, it is observed that the average running time of SMT for the solved cases is 1940s and the percentage of solved cases of SMT is only 15.1\%. Compared with SMT, our method can finish a trial in 74 milliseconds on average and can solve all the 1000 cases.

\begin{table}[h]
    \centering
    \small
    \caption{Performance comparison with SMT}
    \begin{tabular}{|c|c|c|c|}
    \hline 
     & \multicolumn{2}{c|}{Average running time} & Percentage of terminated\\
     \cline{2-3}
     & Unsolved & Solved & cases due to time limit\\

        \hline  
        SMT & $>$2h & 1940s & 84.9\%
         \\
         CRS & N/A & 0.074s & 0\%
         \\
         EDF & N/A & 0.008s & 0\%
         \\
       LLF & N/A & 0.008s & 0\%
         \\
    \hline
    \end{tabular}
    \label{tab:comp_SMT}
\end{table}

%% job_time 
In the last set of experiments, we evaluate the running time of the algorithms by increasing the number of jobs from 100 to 10000. The comparison results are shown in Fig.~\ref{fig:simu_job_time}. It can be observed that the average running time of all the algorithms increase along with the increase of the number of jobs. For all the models, the running time of CRS are under 20s when the number of jobs are less than 2000. When the number of jobs reaches 10,000, the running time of CRS is under 500s, which is acceptable compared with the running time of EDF and LLF methods. SMT terminates due to the time limit (7200s) when the number of jobs are larger than 1000. Note in Fig.~\ref{fig:simu_job_time}, we only show the running time of the feasible cases for CRS, EDF, and LLF. For the infeasible cases, the average running time is under 10s.

\begin{figure}[t!] %thpb
    \centering
    
    \subfloat
    [General model]
    {\includegraphics[width=3.1in]{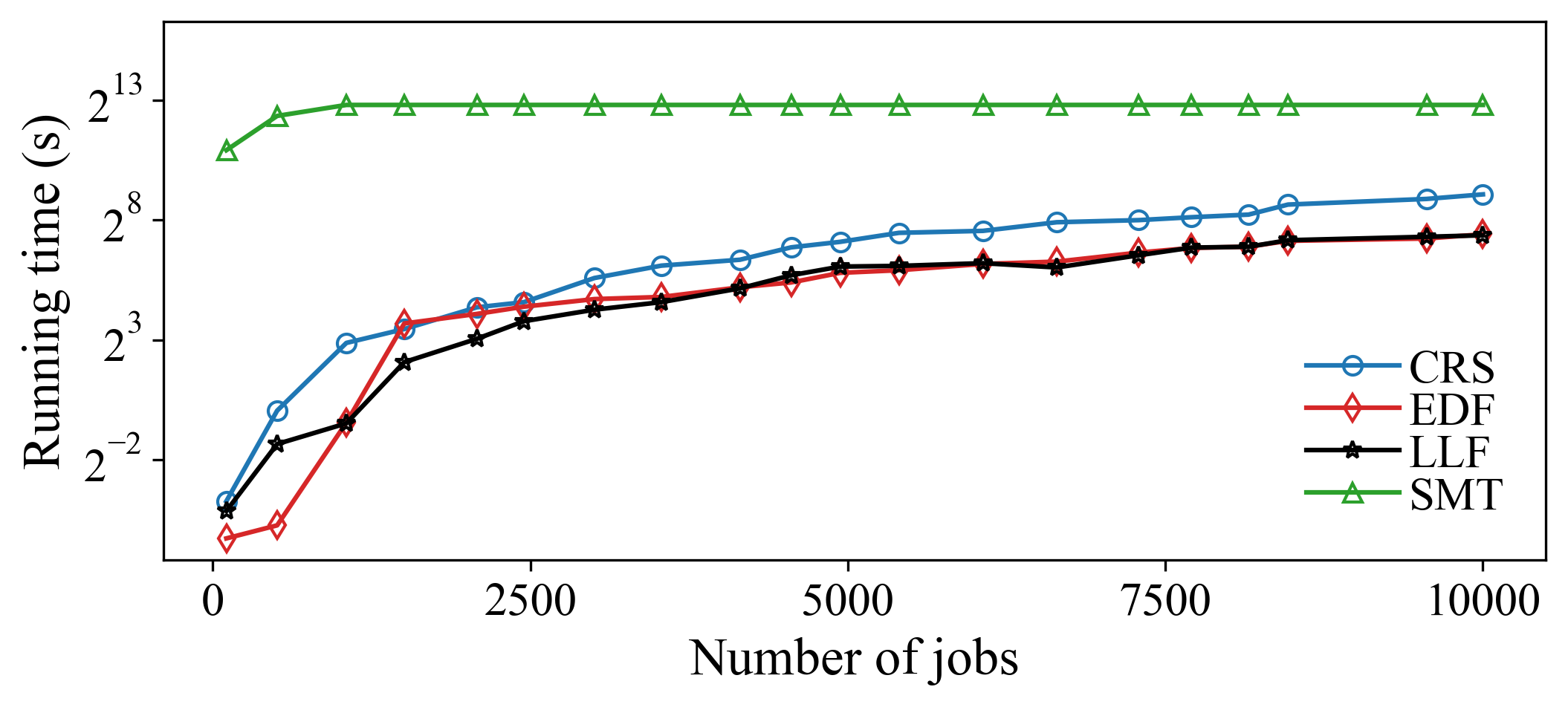}}
     \hspace{0.005\textwidth}\\
    \vspace{-0.1in}
    \subfloat
        [$h\text{-}1\text{-}1$ model]
        {\includegraphics[width=1.55in]{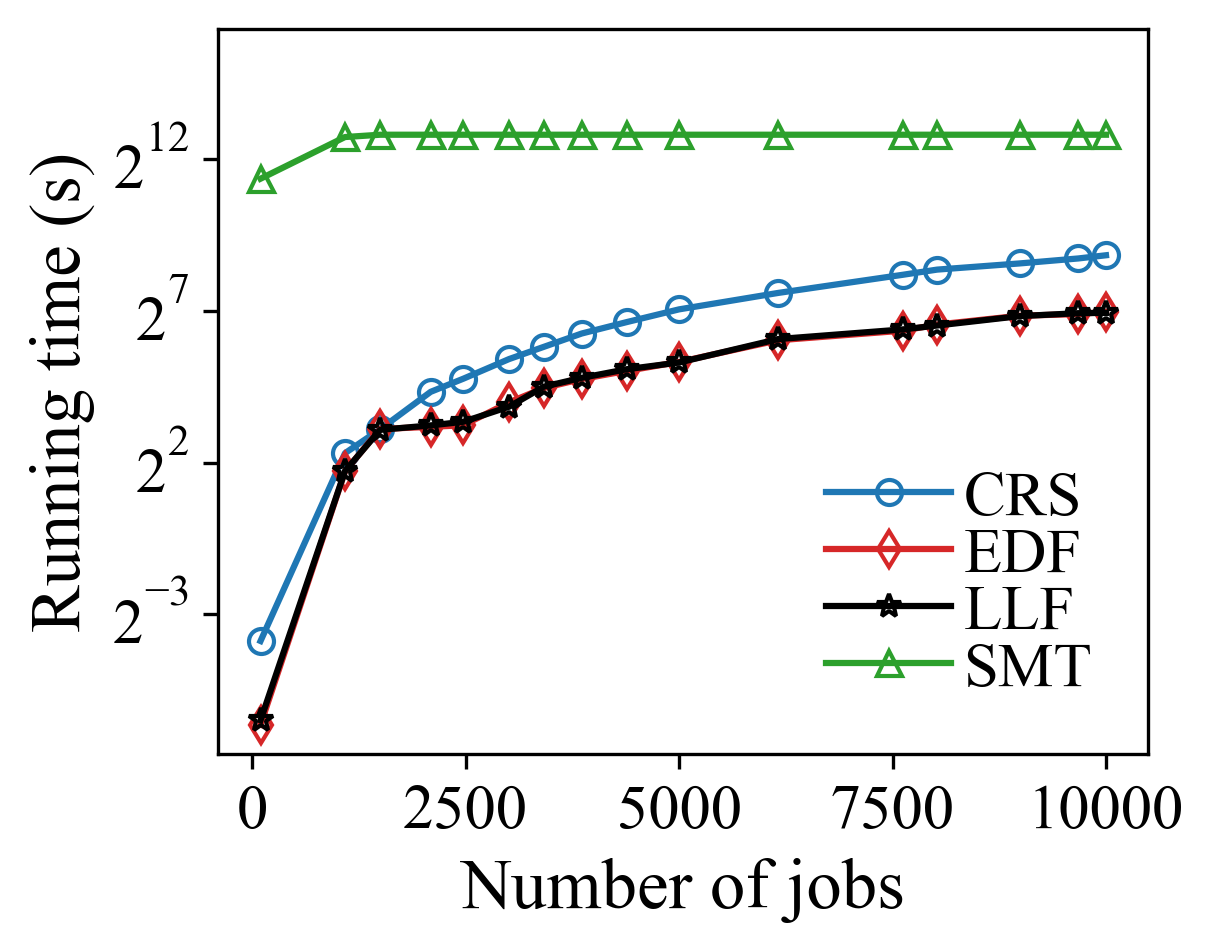}}
         \hspace{0.005\textwidth}
    \subfloat
        [$1\text{-}m\text{-}1$ model]
        {\includegraphics[width=1.45in]{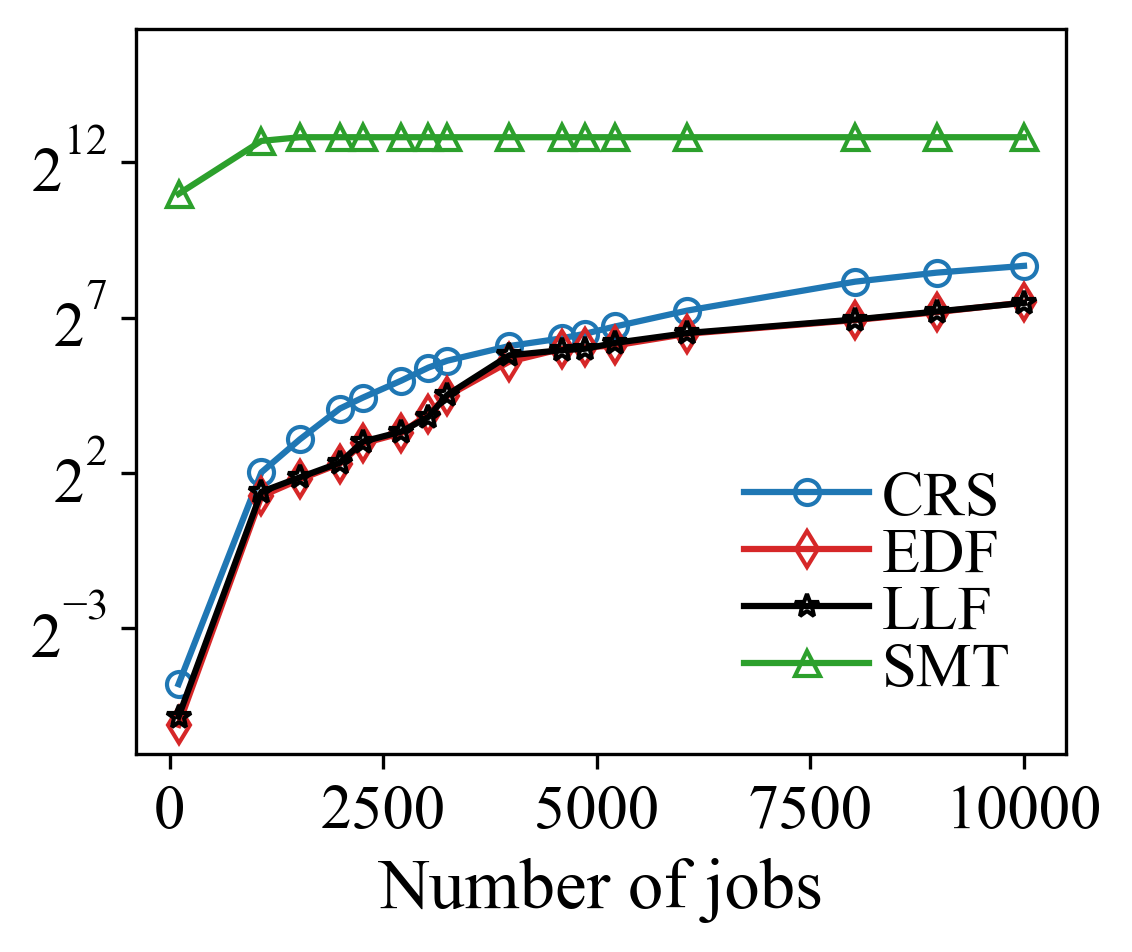}}
        \hspace{0.005\textwidth}
        
    \caption{Comparison of running time with different number of jobs}
    \label{fig:simu_job_time}
     %\vspace{-0.1in}
\end{figure}

\eat{
Fig~\ref{fig:simu_special_311} shows.......
\begin{figure}[t!] %thpb
    \centering
    \subfloat
    [$h\text{-}1\text{-}1$ model (plot 1)]
    {\includegraphics[width=3.3in]{plot_311_u.png}}
     \hspace{0.005\textwidth}\\
    \vspace{-0.1in}
    \subfloat
        [$h\text{-}1\text{-}1$ model (plot 2)]
        {\includegraphics[width=3.3in]{plot_norm_311_u_bound.png}}
         \hspace{0.005\textwidth}
        
    \caption{Average running time of different number of jobs for special case m11(m=3)}
    \label{fig:simu_special_311}
     %\vspace{-0.1in}
\end{figure}
}
\section{Related Work}\label{sec:relatedWorks}

Most real-time networks adopt Time Division Multiple Access (TDMA) based data link layers to guarantee deterministic real-time communications. Sensing and actuating tasks are abstracted as end-to-end flows with specified timing requirements. The existing real-time network scheduling algorithm designs focus on schedulability analysis and management of the network packet scheduling (e.g.,~\cite{sisinni_industrial_2018,saifullah_real-time_2010,saifullah_end--end_2015, zhang_distributed_2017, gong_reliable_2019, zhang_fully_2021}). Those solutions may fit well for NCSs when the transmission time is large 
and the computation time of the controller is negligible. However, 
with the rapid development of real-time network such as RT-WiFi~\cite{wei_rt-wifi_2013}, which supports a minimal time slot of $0.1$ milliseconds, the transmission time of the network segment becomes comparable to the computation time of controller~\cite{chen_online_2019}. In addition, due to an increasing number of industrial applications having complex online optimization algorithms designed for the controllers, the computation time can no longer be ignored~\cite{jo_development_2014, jo_development_2015}. In this paper we propose a three-segment execution model for scheduling composite network and CPU resources in NCSs. To provide more flexibility,  both network segments and computing segments are considered to be preemptive.

The existing related three-segment execution models include self-suspension model, PRedictable Execution Model (PREM) and Acquisition-Execution-Restitution (AER) model. The self-suspension model is composed of two computation segments separated by one suspension interval, which focuses on one resource type~\cite{chen_scheduling_2019}. PREM and AER are designed for multi-core CPU system. PREM enables parallelism by dividing tasks in communication/computation phases ~\cite{alhammad_time-predictable_2014}. The read phase reads data from the main memory, the computation phase can proceed the execution, and the write phase writes the resulting data back to the main memory. Since PREM increases the predictability of an application by isolating memory accesses, it is widely used~\cite{becker_contention-free_2016, becker_scheduling_2018, wasly_hiding_2014}. However, the PREM model usually couples the write phase of a task with the next activated read phase on the same core~\cite{wasly_hiding_2014, alhammad_memory_2015}. By contrast, AER from ~\cite{maia_closer_2016} allows more freedom to schedule the read and write phases. To implement the AER model on a scratchpad memory (SPM) based single-core, the SPM is divided into two parts for the currently executing task and the next scheduled task~\cite{wasly_hiding_2014}. Hiding the communication latency at the basic-block level due to a modification of the LLVM compiler toolchain was introduced and a co-scheduling and mapping of computation and communication phases from task-graph for multi-cores was presented in~\cite{soliman_wcet-driven_2017}. An offline scheduling scheme for flight management systems using a PREM task model is presented in~\cite{durrieu_predictable_2014}, which avoids interferences to access the communication medium in the schedule. In addition, the study in~\cite{rouxel_hiding_2019} proposed to minimize the communication latency when scheduling a task graph on a multi-core by overlapping communications and computation based on the AER model. Similarly, to allow overlapping the memory and execution phases of segments of the same task, a new streaming model was introduced in~\cite{soliman_segment_2019}. To improve the schedulability of latency-sensitive tasks, a protocol that allows hiding memory transfer delays while reducing priority inversion was proposed in~\cite{casini_predictable_2020}. A memory-centric scheduler for deterministic memory management on COTS multiprocessor platforms without any hardware support was proposed in~\cite{schwaricke_fixed-priority_2020}. To increase system determinism by reducing task switching overhead, an Integer Linear Programming (ILP) formula optimization method without contention was proposed in~\cite{koike_contention-free_2020}. The work in~\cite{schuh_study_2020} studied global static scheduling, time-triggered and non-preemptive execution of tasks by implementing SCADE applications. As those recent works above focus on scheduling for multi-core CPU systems, the communication phases are reasonably considered to be non-preemptive, while the network scheduling in our CRS model employs the preemptive segments to provide better parallelism.

For the existing related works on end-to-end scheduling in NCSs, utilizing the system state of  control systems to design the scheduler is an important methodology to optimize QoS. Based on the system state, the scheduling and feedback co-design for NCSs is introduced in~\cite{branicky_scheduling_2002, saha_dynamic_2015}, which computes the deadline for the real-time tasks. However, the network scheduling is not guaranteed to be real-time because of the CAN protocol considered in the model. The work in~\cite{lesi_integrating_2020} studies how to integrate security guarantees with end-to-end timeliness requirements for control tasks in resource-constrained NCSs. The proposed sensing-control-actuation model is similar to our CRS model, but the sensing, computing and actuating segments in the proposed model have designed release times and deadlines. This is not general as the CRS model which only has a task deadline. Other related network and computing co-scheduling works include the co-generation of static network and task schedules for distributed systems consisting of preemptive time-triggered tasks which communicate over switched multi-speed time-triggered networks, which is solved by a SMT solver~\cite{craciunas_combined_2016}, and the feasible time-triggered schedule configuration for control applications, which was designed to minimize the control performance degradation of the applications due to resource sharing~\cite{minaeva_control_2020}.

\section{Conclusion and Future Work}\label{sec:conclusion}

In this paper, we study the composite resource scheduling problem in networked control systems (NCSs). A new composite resource scheduling (CRS) model is introduced to describe the sensing, computing, and actuating segments in NCSs. We formulate the composite resource scheduling problem in constraint programming and prove it to be NP-hard in the strong sense. Two special models and the general model are studied. For the CRS problem under the $h\text{-}1\text{-}1$ model, we present an optimal algorithm that utilizes the intervals of network resource utilization of $100\%$ to prune the search space to find the solution. For the CRS problem under the $1\text{-}m\text{-}1$ model, we propose an optimal algorithm that exploits a novel backtracking strategy to adjust the timing parameters of the tasks so that there exist no intervals of network resource utilization larger than $100\%$ obtained in the two-stage decomposition method. For the general case, we propose a heuristic solution to find the feasible schedule based on the greedy strategy of modifying the segments in the interval of either network resource utilization or computing resource utilization larger than $100\%$. 

As the future work, the proposed algorithms will be implemented on our NCS testbed to evaluate their effectiveness and practicability in real-life systems.
\section{acknowledgement}\label{sec:ack}

The  work  reported  herein  is  partly supported  by the National Science Foundation under NSF Awards CNS-2008463 and IIS-1718738.

\vspace{0.1in}

%\clearpage
\bibliographystyle{IEEEtran}
\bibliography{MyLibrary2}

\end{document}